\documentclass{article} 
\usepackage{iclr2027_conference,times}

\usepackage{amsmath,amsfonts,bm}

\def\eqref#1{equation~\ref{#1}}

\def\1{\bm{1}}

\DeclareMathAlphabet{\mathsfit}{\encodingdefault}{\sfdefault}{m}{sl}
\SetMathAlphabet{\mathsfit}{bold}{\encodingdefault}{\sfdefault}{bx}{n}

\usepackage{hyperref}
\usepackage[capitalize]{cleveref}
\usepackage{url}
\usepackage{xspace}
\usepackage{graphicx}
\usepackage{xcolor,soul}
\usepackage{booktabs}
\usepackage{amsthm}
\usepackage{dsfont}
\usepackage{makecell}
\usepackage{graphicx}
\usepackage{colortbl}
\usepackage{algorithmicx}
\usepackage{algpseudocode}
\usepackage{caption}
\usepackage{array}
\usepackage{multirow}
\usepackage{longtable}
\usepackage{wrapfig}

\title{\semfield: A Simple, Linear, Continuous, \textit{yet} Robust Semantic Watermark}

\author{
  \parbox{0.9\linewidth}{\centering 
    Varun Gumma$^{1}$\thanks{Corresponding Author: \texttt{varun.gumma@protonmail.com}} \quad 
    Navonil Majumdar$^{1}$ \quad 
    Soujanya Poria$^{1}$ \\[0.5em]
    $^{1}${\rm DeCLaRe Lab, Nanyang Technological University} 
  }
}
\newcommand{\semfield}{\textsc{SemField}\xspace}
\newcommand{\semfieldpl}{\textsc{SemField-PL}\xspace}
\newcommand{\alimark}{\textsc{AliMark}\xspace}
\newcommand{\samark}{\textsc{SAMark}\xspace}
\newcommand{\KGW}{\textsc{KGW}\xspace}
\newcommand{\EXP}{\textsc{EXP}\xspace}
\newcommand{\SIR}{\textsc{SIR}\xspace}  
\newcommand{\PRO}{\textsc{PRO}\xspace}
\newcommand{\synthid}{\textsc{SynthID}\xspace}
\newcommand{\pmark}{\textsc{PMark}\xspace}
\newcommand{\simmark}{\textsc{SimMark}\xspace}
\newcommand{\semstamp}{\textsc{SemStamp}\xspace}
\newcommand{\ksemstamp}{$k$-\textsc{SemStamp}\xspace}
\newcommand{\swordstamp}{\textsc{SwordStamp}\xspace}
\newcommand{\kswordstamp}{$k$-\textsc{SwordStamp}\xspace}

\newcommand{\opt}{\textsc{OPT-1.3B}\xspace}
\newcommand{\llama}{\textsc{Llama-3.2-1B}\xspace}
\newcommand{\llamajudge}{\textsc{Llama-3.1-8B}\xspace}
\newcommand{\qwen}{\textsc{Qwen3-1.7B}\xspace}
\newcommand{\qwenrewriter}{\textsc{Qwen3.8-27B}\xspace}
\newcommand{\mpnet}{\textsc{all-mpnet-base-v2}\xspace}

\newcommand{\clean}{\textsc{Clean}\xspace}
\newcommand{\pegasus}{\textsc{Pegasus}\xspace}
\newcommand{\parrot}{\textsc{Parrot}\xspace}
\newcommand{\dipper}{\textsc{Dipper}\xspace}
\newcommand{\llmrewrite}{\textsc{LLM-Rewrite}\xspace}
\newcommand{\realnews}{\textsc{C4-RealNewsLike}\xspace}

\newcommand{\tpr}{$\operatorname{TPR}$\xspace}
\newcommand{\tpro}{$\operatorname{TPR@1\%}$\xspace}
\newcommand{\tprf}{$\operatorname{TPR@5\%}$\xspace}
\newcommand{\fpr}{$\operatorname{FPR}$\xspace}
\newcommand{\auroc}{$\operatorname{AUROC}$\xspace}
\newcommand{\pplratio}{$\operatorname{PPLRatio}$\xspace}

\newcommand{\best}[1]{\textbf{#1}}
\newcommand{\second}[1]{\underline{#1}}

\definecolor{mygreen}{HTML}{d9ead3}
\definecolor{myorange}{HTML}{fce5cd}
\definecolor{myred}{HTML}{f4cccc}
\definecolor{mymagenta}{HTML}{ead1dc}
\definecolor{myblue}{HTML}{cfe2f3}
\definecolor{mygray}{HTML}{efefef}
\definecolor{pert_green}{HTML}{02a61b}
\definecolor{pert_red}{HTML}{cc0000}

\newtheorem{theorem}{Theorem}

\newtheorem{proposition}{Proposition}

\newcommand{\ruledalgorithm}[3]{%
\begin{minipage}[t]{#1}
\hrule
\vspace{4pt}
\textbf{#2}
\vspace{3pt}
\hrule
\vspace{4pt}
#3
\vspace{4pt}
\hrule
\end{minipage}
}

\definecolor{nmcolor}{RGB}{7, 136, 66}

\iclrfinalcopy 
\begin{document}
\maketitle

\begin{abstract}
The rapid proliferation of Large Language Models (LLMs) necessitates reliable watermarking techniques to identify AI-generated text and ensure appropriate attribution. While token-based watermarks are vulnerable to paraphrasing, a central challenge for semantic watermarking is to turn sentence meanings into a stable, well-calibrated document-level signal. To this end, we introduce \semfield, a simple, training-free semantic watermark that embeds a continuous, linear signal directly into the sentence embedding space. Using a shared secret key to define a specific Gaussian direction, \semfield iteratively evaluates candidate sentences and selects those that maximize the alignment of the cumulative document aggregate with this targeted direction. We also propose \semfieldpl, a polarity-locked variant that first determines the optimal orientation from the initial sentence and continuously reinforces it throughout the generation process. The document-level aggregated formulation guarantees exact invariance to sentence reordering and provides theoretical bounds against structural tampering, such as sentence insertion and deletion. Lastly, with extensive evaluations across three models, we demonstrate that both variants outperform 12 recent baselines. Across clean detection and four paraphrasing attacks, both variants achieve a mean True Positive Rate (\tpr) of 88.2\% to 90.4\% at a 1\% False Positive Rate (\fpr), all while maintaining comparable perplexity and naturalness as human-generated content. We open-source our implementation at \url{https://github.com/declare-lab/SemField}
\end{abstract}
\section{Introduction}
\label{sec:intro}

The widespread use of LLMs has coincided with a rapid increase in AI-generated text online. Previous studies \citep{hanley2024machinemade,dolezal2026impact} have reported growth in machine-generated news articles and in AI-generated or assisted material on newly published websites. Similarly, there has also been a boost in AI-written research ideas \citep{ICLR2025_ea94957d}, papers \citep{Liang2025} and large-scale codebases \citep{doi:10.1126/science.adz9311} which users can claim as their own \citep{10.1145/3771937}. In such cases, providers must ascertain if a text was generated using their models for appropriate ownership attribution. 

Watermarking has emerged as one such algorithmic safeguard that embeds imperceptible statistical signals directly into model generations to allow post-hoc identification. To this end, major companies such as Google and Anthropic have already declared the inclusion of a watermark in their outputs for downstream identification \citep{Dathathri2024,Stutz2026,anthropic2026claudewatermark}, while regulators have proposed norms for machine-readable marking and detection of AI-generated content \citep{ec2026codeofpractice}. A reliable watermarking algorithm not only provides a strong detection on its generated documents, but also avoids falsely claiming independent human-generated or other providers' text under their own watermark. In recent times, AI-generated content is also rarely used as is and is mostly subjected to local editing or full paraphrasing before further usage. Hence, a strong watermark signal should be sufficiently robust to such tampering while still providing a high true-positive yet low false-positive identification rate. Lastly, the inclusion of the watermark should not degrade the quality of the associated text and should be as close as possible to the natural text generated by the model or human. 

Early language-model watermarks biased token sampling according to a secret key, making certain tokens more likely and testing whether they occur unusually often \citep{pmlr-v202-kirchenbauer23a,Dathathri2024,aaronson2023watermarking}. However, a rewrite can replace those tokens while retaining the meaning of the passage. To counter such paraphrasing attacks, Semantic watermarks move the signal to sentence representations and select candidates using semantic regions, relations between neighboring sentences, or structured codes \citep{hou-etal-2024-semstamp,hou-etal-2024-k,dabiriaghdam-wang-2025-simmark,li2026alimark}. These methods show how watermark evidence can be tied to meaning rather than exact wording. 

In this work, we further the idea of semantic watermarks by exploring whether a \textit{continuous signal} over the embedding space can be successfully accumulated at the document level. Following this motivation, we propose \semfield\footnote{We use ``field'' to mean a continuous scalar function over the sentence-embedding space}, a simple, linear, and training-free semantic watermark. \semfield represents the watermark signal as a continuous scalar function over normalized sentence embeddings. The mechanism relies on a secret key that determines a specific Gaussian direction within the embedding space. During generation, \semfield iteratively samples candidate next sentences, and projects them onto this keyed direction to calculate their alignment. It then selects the sentence that best increases the score of the resulting \textit{prospective} document aggregate.

Since the Gaussian vector is symmetric, the candidates can be selected to align along either orientation. Hence, we extend the logic to propose a polarity-locked variant, \semfieldpl. This extension identifies the orientation to align based on the first sentence of the document, subsequently reinforcing the signal against that specific orientation throughout the generation process. At the detection stage for both variants, the sentence embeddings are summed together, and the normalized aggregate is projected onto the same keyed direction. This yields a robust document-level statistic, which is then evaluated against a standard normal distribution to reliably detect the presence of the watermark. Such a formulation based on document-level aggregation also provides some theoretical bounds against structural modifications such as the insertion or deletion of sentences. Further, the summation-based aggregation provides an exact invariance to sentence-level reordering.

Finally, we evaluate \semfield and \semfieldpl against 12 recent baselines, 4 paraphrasing, and 4 structural attacks. To confirm the efficacy of our approach, we also repeat it across 3 different models, and also note the perplexity and MAUVE scores of the generations to ascertain the quality of generations in terms of fluency and naturalness\footnote{as close as possible to human generations}. Across all these experiments, both variants achieve a mean \tpro of $88.2$--$90.4\%$ compared with $77.6$--$81.7\%$ for the strongest baseline on each model. \semfieldpl also consistently achieves the best mean \tprf, reaching $96.4$--$96.6\%$ compared with $91.9$--$94.2\%$ for the strongest baseline on each model. The MAUVE scores of the two variants also rank in the top two on all three models, while their perplexity remains within the range of the compared methods, indicating no abnormal generations. The structural tests further indicate that both variants preserve their clean detection scores under sentence reordering and also remain detectable after substantial sentence insertion and deletion. In short, our contributions are as follows:
\begin{enumerate}
\item We introduce \semfield, a simple, training-free watermark that uses a continuous keyed field and a normalized \textit{prospective} document aggregate to select sentences that best improve the document-level signal. We also present \semfieldpl, an extension that first selects a particular field orientation and then consistently reinforces against it.

\item We derive the conditional Gaussian null distribution under the ideal key model, prove exact invariance to reordering intact sentences, and bounded score changes under sentence insertion and deletion.

\item We provide a thorough comparison of \semfield and \semfieldpl against a list of recent and relevant baselines and demonstrate their strong efficacy and robustness along paraphrasing and structural attacks, while maintaining a comparable textual quality as the baselines. We follow it up with two practical deployment scenarios and once again corroborate the high performance and reliability of our methods. Lastly, we close with the ablations of the major components of our techniques, such as polarity locking, candidate count, and sampling temperature. 
\end{enumerate}
\section{Related Work}
\label{sec:related_work}

\paragraph{Token-based watermarks.}
Most early language-model watermarks operate directly on token probabilities. A secret key partitions or scores the vocabulary at each generation step, and decoding is modified so that selected tokens occur more frequently than expected under unwatermarked generation. The green-list construction of \KGW is a representative example \citep{pmlr-v202-kirchenbauer23a}. Other token-based methods modify the sampling rule without applying a fixed logit bias. \EXP uses keyed pseudorandom values to perform exponential sampling from the language-model distribution \citep{aaronson2023watermarking}. \citet{kuditipudi2024robust} develop distortion-free watermarking through inverse-transform and exponential-minimum sampling, together with alignment-based detection for robustness to token edits. \synthid uses keyed tournament sampling and aggregates the resulting token-level scores during detection \citep{Dathathri2024}.

\paragraph{Semantic watermarks.}
Semantic watermarking moves the signal from individual tokens to choices among semantically related textual realizations. A common approach embeds candidate sentences and uses a secret partition, cluster assignment, or geometric constraint to determine which candidates carry valid watermark evidence. \semstamp \citep{hou-etal-2024-semstamp} and \ksemstamp \citep{hou-etal-2024-k} instantiate this idea using locality-sensitive hashing and learned semantic clusters, respectively. Other methods encode the signal through relations between successive representations, as in \simmark \citep{dabiriaghdam-wang-2025-simmark}, or use context-dependent semantic constraints, as in \pmark \citep{huo2026pmark}. More recent work has explored richer semantic codes and alternative units of representation. \alimark converts projections of sentence embeddings into short binary codes and aggregates evidence across a document \citep{li2026alimark}. \samark constructs self-anchored semantic regions that do not depend on the immediately preceding sentence \citep{huo2026samarkselfanchoredtextwatermarking}. \swordstamp and \kswordstamp move from complete sentences to smaller semantic units and introduce order-robust aggregation \citep{diaa2026semanticwatermarkingorderrobustdetection}. Post-hoc approaches attempt to modify an already generated document using semantically selected lexical items, allowing watermarking without access to model logits \citep{chang-etal-2024-postmark}.

\paragraph{Learning-based watermarks.}
A third family learns some part of the watermarking mechanism from data. The learned component may be a transformation that converts context into a token-level bias, a detector trained jointly with the generator, or a watermark encoded directly into the model parameters. For example, \SIR learns a semantic transformation that maps similar contexts to similar vocabulary biases, thereby connecting semantic invariance with token-level generation \citep{liu2024a}. Other approaches jointly train a language model and detector, often using reinforcement learning, so that generated text becomes detectable while retaining task utility \citep{xu2025learning}. Learned encoder--decoder frameworks have also been used to embed and recover multi-bit messages from generated text \citep{3698900.3699002}. Recent work further considers watermarks for open-source models, where decoding-time control cannot be assumed after model release. \PRO jointly optimizes a watermark policy and the language model so that the learned generation pattern agrees with its detector, while regularization targets robustness to subsequent model modifications such as fine-tuning and model merging \citep{xue2025proenablingpreciserobust}.
\section{Methodology}
\label{sec:methodology}

In this section, we first introduce \semfield, and then extend it to its two-sided \semfieldpl along with all the necessary assumptions, bounds, and proofs. The exact algorithms are provided in Appendix \ref{app:semfield-algo}.

\subsection{Setup}
\label{sec:setup}

Let a generated document $D$ be a sequence of sentences $s_i$:
\[
D=(s_1,\ldots,s_T).
\]

A pre-trained encoder $\mathcal{E}_{\theta}(\cdot)$ is used to obtain normalized sentence embeddings
\begin{equation}
x_t
=
\frac{\mathcal{E}_{\theta}(s_t)}{\lVert \mathcal{E}_{\theta}(s_t)\rVert_2}
\in S^{d-1} \subset \mathbb{R}^{d}.
\end{equation}

Similar to prior zero-bit watermarks, the detector receives a document $D$ and a secret key $K$ and tests the hypothesis
\begin{equation}
H_0:\ D\text{ was generated independently of the watermarking procedure keyed by }K.
\end{equation}
The detector computes a $p$-value and declares a watermark present\footnote{That is, it rejects $H_0$.} when the $p$-value falls below a threshold chosen for the desired \fpr.

\subsection{Keyed Semantic Field}
\label{sec:keyed-semantic-field}

The secret key determines a $d$-dimensional vector through a Gaussian pseudo-random generator:
\begin{equation}
\omega_{K} = \operatorname{GaussianPRG}(K)\sim\mathcal{N}(0,I_d),
\end{equation}
where $I_d$ is the $d$-dimensional identity matrix. The vector $\omega_K$ carries the semantic watermark, and the generation favors sentences that strengthen the document's alignment with that direction.

For a normalized semantic representation $u\in S^{d-1}$, we define the \textit{keyed field value} as the projection,
\begin{equation}
f_{K}(u)
=
\omega_{K}^{\mathsf T}u.
\end{equation}

For fixed normalized vector $u$, the score is standard normal (\Cref{thm:gaussian-field-marginal}):
\begin{equation}
f_{K}(u)\sim\mathcal{N}(0,1).
\end{equation}

Further, the field values of two fixed normalized vectors $u$ and $v$ has covariance (\Cref{thm:covariance-of-keyed-semantic-field-scores})
\begin{equation}
\operatorname{Cov}_{K}
\bigl(f_{K}(u),f_{K}(v)\bigr)
=
u^{\mathsf T}v.
\end{equation}
Since their correlation equals their inner product, close representations tend to receive similar field values. This definition also maintains the \textit{continuity} in the field where minor changes in the embedding also produce a corresponding bounded change in the score (\Cref{prop:field-continuity}).

\subsection{Document-Level Semantic Statistic}
\label{sec:doc-level-sem-stats}

The previous section defined the score of a single sentence $s$ with normalized embedding $x$. Given a document as a sequence of sentences, we define the document aggregate representation as
\begin{equation}
A_T
=
\sum_{t=1}^{T}x_t.
\end{equation}

The linearity allows the sum of sentence-level field values to equal the projection of this aggregate:
\begin{equation}
\begin{aligned}
S_K(D)
&=\sum_{t=1}^{T}f_K(x_t)
=\sum_{t=1}^{T}\omega_K^{\mathsf T}x_t
=\omega_K^{\mathsf T}A_T.
\end{aligned}
\end{equation}

Assuming $A_T\neq 0$, we normalize the aggregate as
\begin{equation}
u_T
=
\frac{A_T}{\lVert A_T\rVert_2},
\end{equation}
and define the document-level statistic
\begin{equation}
Z_K
=
\frac{\omega_K^{\mathsf T}A_T}{\lVert A_T\rVert_2}
=
\omega_K^{\mathsf T}u_T.
\end{equation}

The normalization by $\lVert A_T\rVert_2$ prevents repeated or highly correlated sentences from being counted as independent evidence. For example, if $x_1=\cdots=x_T=x$, then $A_T=Tx$, but 
\begin{equation}
Z_K
=
\frac{T\omega_K^{\mathsf T}x}{T}
=
\omega_K^{\mathsf T}x.
\end{equation}
Thus, repeating the sentence does not artificially inflate the watermark score.

\subsection{\semfield Generation}
\label{sec:sf-gen}

Generation proceeds sequentially, sentence by sentence. Suppose that $t-1$ sentences have already been selected, and their running aggregate is
\begin{equation}
A_{t-1}
=
\sum_{j=1}^{t-1}x_j.
\end{equation}

At step $t$, the language model $\mathcal{M}_{\phi}$ samples $Q$ candidate sentences
conditioned on the current document state $s_{<t}$:
\begin{equation}
c_{t,q}\sim\mathcal{M}_{\phi}(\,\cdot\mid s_{<t}),
\qquad q=1,\ldots,Q,
\end{equation}
forming the candidate set
\begin{equation}
C_t=\{c_{t,1},\ldots,c_{t,Q}\}.
\end{equation}

If candidate $q$ were selected, the new aggregate would be
\begin{equation}
A_{t,q}
=
A_{t-1}+x_{t,q}
\end{equation}

where $x_{t,q}$ is its normalized embedding. We score the \textit{prospective} aggregate with the document statistic used at detection:
\begin{equation}
G_{t,q}
=
\frac{\omega_K^{\mathsf T}A_{t,q}}{\lVert A_{t,q}\rVert_2}.
\end{equation}
\semfield then chooses the candidate with the largest score,
\begin{equation}
q_t^\star
=
\operatorname*{arg\,max}_{1\leq q\leq Q}G_{t,q},
\qquad t=1,\ldots,T,
\end{equation}
and sets $A_t=A_{t,q_t^\star}$. This \textit{greedy} rule favors alignment with the keyed direction at every step, and maximizes the exact score we use for detection as well.

\subsection{\semfield Detection}
\label{sec:detect-and-exact-cond-calib}

For a completed document $D$, the detector recomputes the sentence embeddings and their aggregate $A_T$. From that observed score is
\begin{equation}
z_{\mathrm{obs}}
=
\frac{\omega_K^{\mathsf T}A_T}{\lVert A_T\rVert_2}.
\end{equation}
A larger value of $z_{\mathrm{obs}}$ provides stronger watermark evidence
through the alignment of the document aggregate with $\omega_K$.

Under $H_0$, if the document is independent of the tested key, then by
\Cref{thm:conditional-gaussian-null},
\begin{equation}
Z_K\mid D\sim\mathcal{N}(0,1).
\end{equation}

The upper-tail probability of the observed score is
\begin{equation}
p_{\mathrm{SF}}
=
\Pr_{H_0}\!\left(Z_K\geq z_{\mathrm{obs}}\mid D\right)
=
1-\Phi(z_{\mathrm{obs}}),
\end{equation}
where $\Phi$ is the standard Normal CDF. A small $p_{\mathrm{SF}}$ indicates that the observed alignment is unusual under $H_0$, and that we have to reject it.

\subsection{Polarity Locking (PL)}
\label{sec:polarity-locking}

By definition, \semfield always selects sentences aligned towards $+\omega_K$. But the Gaussian construction is symmetric as $\omega_K$ and $-\omega_K$ have the same distribution, and neither orientation is intrinsically preferable before generating the candidates. Further, the Gaussian null and aggregate-based results discussed in prior sections also apply to either orientation. 

To finalize one orientation before full generation, we use the first candidate set. At $t=1$, it selects the candidate with the largest absolute score,
\begin{equation}
q_1^\star
=
\operatorname*{arg\,max}_{1\leq q\leq Q}|G_{1,q}|
\end{equation}
and records
\begin{equation}
\sigma
=
\operatorname{sign}\bigl(G_{1,q_1^\star}\bigr)
\in\{-1,+1\}.
\end{equation}
If the selected score is exactly zero, ties are resolved by setting $\sigma=+1$. For every later step, the candidate is selected such that it best improves the score along the fixed direction.
\begin{equation}
q_t^\star
=
\operatorname*{arg\,max}_{1\leq q\leq Q}\sigma G_{t,q}.
\end{equation}
The rule reduces to regular \semfield when $\sigma=+1$ and $\sigma=-1$ reinforces the negative direction. A fixed sign also avoids switching orientations between sentences and canceling their contributions in the final aggregate. We call this method polarity-locked \semfield, or \semfieldpl.

During detection, a document generated toward $+\omega_K$ tends to have a large positive projection, while one generated toward $-\omega_K$ tends to be largely negative. Since the detector does not know which orientation was selected, it treats either outcome as evidence. Under the symmetric Gaussian null, these events correspond to the \textit{disjoint} upper and lower tails,
respectively. Hence, writing $h_{\mathrm{obs}}=|z_{\mathrm{obs}}|$, their probabilities sum to
\begin{equation}
\begin{aligned}
p_{\mathrm{PL}}
&=
\Pr_{H_0}(Z_K\geq h_{\mathrm{obs}}\mid D)
+
\Pr_{H_0}(Z_K\leq-h_{\mathrm{obs}}\mid D)\\
&=
\left[1-\Phi\!\left(h_{\mathrm{obs}}\right)\right]
+
\Phi\!\left(-h_{\mathrm{obs}}\right)\\
&=
2\left[1-\Phi\!\left(h_{\mathrm{obs}}\right)\right]
\end{aligned}
\end{equation}

Once again, a small $p_{\mathrm{PL}}$ indicates that the observed alignment is unusual under $H_0$, and that we have to reject it.

\subsection{Structural Robustness}
\label{sec:struct-rob}

A structural attack changes a document by adding sentences, deleting existing sentences, or reordering them. Let $A$ and $A'$ denote the original and attacked document aggregates, respectively, and write
\begin{equation}
A'=A+\Delta,
\end{equation}
where $\Delta$ is the net semantic contribution of the attack.

For insertion or deletion attacks, provided $A,A'\neq 0$, the normalized aggregates satisfy
\begin{equation}
\left\lVert
\frac{A'}{\lVert A'\rVert_2}
-
\frac{A}{\lVert A\rVert_2}
\right\rVert_2
\leq
\frac{2\lVert\Delta\rVert_2}{\min\{\lVert A\rVert_2,\lVert A'\rVert_2\}}.
\end{equation}
Consequently, the change in the signed \semfield score is bounded by
\begin{equation}
|Z'_K-Z_K|
\leq
\frac{2\lVert\omega_K\rVert_2\lVert\Delta\rVert_2}{\min\{\lVert A\rVert_2,\lVert A'\rVert_2\}}.
\end{equation}
By the reverse triangle inequality, the same bound holds for the change in the \semfieldpl score $\bigl||Z'_K|-|Z_K|\bigr|$. Therefore, if a document's score exceeds the relevant detection threshold by more than this bound, the edit cannot change its positive decision. The complete proof of it is given in \Cref{prop:aggregate-stability}.

For sentence reordering, let $\pi$ be a permutation of $\{1,\ldots,T\}$. If sentence boundaries and individual sentence embeddings remain unchanged, then
\begin{equation}
A_T^{\pi}
=
\sum_{t=1}^{T}x_{\pi(t)}
=
\sum_{t=1}^{T}x_t
=
A_T.
\end{equation}
Similarly, $Z_K^{\pi}=Z_K$ and $|Z_K^{\pi}|=|Z_K|$. Therefore, detection by either \semfield or \semfieldpl is exactly invariant to reordering intact sentences. The formal proof is provided in \Cref{prop:reordering-invariance}.

\section{Experiments}

\subsection{Setup and Baselines}
Following previous works, we evaluate all methods on the same 500-document collection drawn from the \realnews subset. Each record contains an original human-written document, and its first sentence serves as the generation prompt, while the remainder serves as the human-written (unwatermarked) reference continuation. To accurately ascertain the efficacy of the \semfield variants, we run our experiments independently across 3 different language models: \llama \citep{grattafiori2024llama3herdmodels}, \qwen \citep{yang2025qwen3technicalreport}, and \opt \citep{zhang2022opt}. 

\paragraph{Watermark methods.}
We compare \semfield and \semfieldpl against 12 recent and relevant baselines for language modeling, \textit{viz.} \KGW, \EXP, \synthid, \SIR, \semstamp, \ksemstamp, \simmark, \pmark, \alimark, \samark, \swordstamp, and \kswordstamp. For each baseline, any algorithm-specific parameters required by the original implementation are kept unchanged.

\paragraph{Generation protocol.}
For each document \(i\), let \(p_i\) denote its initial prompt. A clean watermarked output is generated by continuing \(p_i\) with the selected watermark method:
\[
    y_i
    \sim
    P_{\mathcal{M}_{\phi}}\!\left(
        y \mid p_i,\text{watermark method},K
    \right),
\]
where $\mathcal{M}_{\phi}$ is the language model and $K$ is the method’s secret key. For methods that use best-of-$Q$ candidate selection, we set $Q=64$, matching the candidate budget used by \citet{li2026alimark}, and the semantic algorithms use \mpnet as the embedding model. All models and tested algorithms use $\operatorname{temperature}=0.7$, $\operatorname{top\_p}=0.95$, $\operatorname{repetition\_penalty}=1.05$ during generation.

\subsection{Attacks and Evaluation}
The primary objective of a watermark is reliable detection after the text has been modified. Hence, we study the robustness of our method across various types of attacks and textual modification strategies. For all experiments, we first measure clean detection on nascent/unattacked watermarked outputs.

\paragraph{Paraphrasing.} 
A standard process that LLM-generated outputs are subjected to is paraphrasing, i.e., rewriting the textual content of a document while retaining its meaning as much as possible. Hence, we then evaluate robustness to rewriting at two levels. All attacks receive only the clean watermarked text and do not have access to the secret key or detector.

\begin{enumerate}
    \item \textbf{Sentence-level rewriting:} This attack treats the document as a list of sentences and independently paraphrases each sentence. Hence, the order of the sentences and the information they convey remains the same, but they are reworded. \pegasus\footnote{\url{https://huggingface.co/tuner007/pegasus_paraphrase}} \citep{zhang2020pegasus} and \parrot \citep{prithivida2021parrot} are two such attacks that change
    local wording while approximately preserving the original sentence boundaries and document structure.

    \item \textbf{Document-level rewriting:} A stronger and more comprehensive attack would be to completely paraphrase the textual content end-to-end. Such a transformation destroys all sentence structure and is aimed at only preserving the overall meaning of the document, whose detection is a true test of a semantic watermark. Hence, we treat the following as the primary attacks for our work: (i) \dipper \citep{krishna2023paraphrasing} is a \textsc{T5}-based paraphraser model configured to encourage lexical\footnote{how many words should change} and order\footnote{how much the sentence structure should change.} diversity while preserving the document meaning, (ii) \llmrewrite uses \qwenrewriter \citep{qwen38}, with the instruction to rewrite the entire document, avoid reusing the original words and phrases where possible, and preserve the original meaning.
\end{enumerate}

\paragraph{Detection metrics.}
Each detector produces a scalar score, with higher values indicating stronger evidence for a watermark. For each method and attack, we report:

\begin{enumerate}
    \item \auroc: Measures how well the detector separates watermarked text from unwatermarked text across all possible detection thresholds. The ROC curve plots \tpr against \fpr at various thresholds, and \auroc is the area under that curve. Formally, $\mathrm{AUROC} = \Pr\!\left(S_{\text{wm}}>S_{\text{null}}\right) + \frac{1}{2}\Pr\!\left(S_{\text{wm}} = S_{\text{null}}\right)$,
    and describes the probability that a randomly chosen watermarked document receives a higher score than a randomly chosen null document, with half credit for ties. An \auroc of $1.0$ indicates perfect separation, while $0.5$ is a random guess.

    \item $\operatorname{TPR}@p\%\operatorname{FPR}$: Measures the fraction of watermarked documents detected at the target $p\%$ \fpr. For threshold $\tau_p$, the target is $\Pr\!\left(S_{\mathrm{null}} \geq \tau_p\right)=p/100$. The reported detection power is then
    $\operatorname{TPR}@p\%\operatorname{FPR}=\Pr\!\left(S_{\mathrm{wm}} \geq \tau_p\right)$. In this work, we report $p \in \{1,5\}$, with the former being the stricter and main evaluation point. We estimate these operating points from the empirical ROC curve, using interpolation at the target \fpr when needed. An extension is provided in the following analysis, where in the \fpr cut-offs are empirically estimated from a large set of null documents \textit{a priori} for a deployable scenario.
\end{enumerate}

We also report the mean \tpro\fpr and \tprf\fpr across the clean
condition and four rewrite attacks. This summarizes detection performance across both unattacked and rewritten text, and provides a more \textit{holistic} evaluation of the algorithms,
\begin{equation}
    \operatorname{MeanTPR}
    =
    \frac{1}{|\mathcal{A}|}
    \sum_{a\in\mathcal{A}}
    \operatorname{TPR}_{a},
\end{equation}
where
\[
    \mathcal{A}
    =
    \{\clean,\pegasus,\parrot,\dipper,\llmrewrite\}.
\]

\paragraph{Text quality.}

To measure whether watermarking makes the output less natural, we evaluate the clean watermarked text with \llamajudge \citep{grattafiori2024llama3herdmodels} as a language modeling judge for all methods. To this end, we report corpus-level perplexity, and the $\operatorname{PPLRatio}$, which we define as:
\begin{equation}
    \operatorname{PPLRatio}
    =
    \frac{
        \operatorname{PPL}(\text{watermarked text})
    }{
        \operatorname{PPL}(\text{human reference})
    }.
\end{equation}
A ratio near 1.0 indicates that the generated text has a predictability similar to the human reference, while a larger ratio indicates a larger deviation from the reference distribution.

\paragraph{Structural stress tests.}

In a separate experiment, we evaluate insertion, dilution, deletion, and sentence reordering attacks at several attack strengths \citep{li2026alimark}. Note that, in these attacks, no paraphrasing of the text is done. These tests measure \tpro\fpr against the fraction of inserted, deleted, or reordered sentences. We describe each attack below:

\begin{enumerate}
    \item \textbf{Insertion:}
    We insert short, semantically empty sentence markers ($.$, $!$, $?$ etc.) at randomly selected sentence boundaries. The attack strength $\delta$ denotes the fraction of original sentence positions at which an additional marker is inserted. This tests whether the detector is affected by extra sentence boundaries that carry no watermark information.

    \item \textbf{Dilution:}
    We insert sentences sampled from an independent null-text pool into the watermarked document. The original sentences are retained, while the added sentences contribute no watermark signal. At strength $\delta$, the number of inserted null sentences is chosen so that they form a fraction $\delta$ of the resulting document. This measures how well the detector retains evidence when watermark-bearing content is mixed with unrelated text.

    \item \textbf{Deletion:}
    We randomly remove a fraction $\delta$ of the original sentences. The remaining sentences preserve their original order, but some watermark evidence is lost. This evaluates the detector's tolerance to missing content.

    \item \textbf{Reordering:}
    We randomly select a fraction $\delta$ of the sentences and randomly permute their positions while leaving the sentence content unchanged. This attack changes the document order without changing the individual sentence meanings or embeddings. It therefore tests whether detection depends on sentence order.
\end{enumerate}

\section{Results and Analysis}

In this section, we discuss the primary results of our algorithm as compared to all the baselines, along with some observations and ablations. Any score triples reported here on are in the order of \llama, \qwen, and \opt, respectively.

\subsection{Overall Detection, Paraphrase Robustness, and Textual Quality}

\paragraph{Ours \textit{vs.} Baselines.}
\Cref{tab:llama32-1b-main-no-nonce,tab:qwen3-1.7b-main-no-nonce,tab:opt-1.3b-main-no-nonce} present the main results for \llama, \qwen, and \opt, respectively. Across all three models, \semfield and \semfieldpl occupy the top two positions in mean \tpro\fpr. Their mean scores range from $88.2\%$ to $90.4\%$, compared with $78.0\%$, $78.2\%$, and $81.7\%$ for the strongest baseline. \semfieldpl also obtains the highest mean \tprf\fpr of $96.6\%$, $96.4\%$, and $96.4\%$, while \semfield has the second best on \llama, and a minor difference of $-0.2$, and $-0.4$ with respect to the second best on \qwen and \opt.

The strongest paraphrasing attack is \llmrewrite, which rigorously alters the input text end-to-end. Even there, both \semfield variants exceed the best baseline in \tpro\fpr on all three models. For example, \semfieldpl obtains $66.4\%$, $76.0\%$, and $69.2\%$, whereas the strongest baseline, \samark, obtains $64.6\%$, $60.2\%$, and $67.2\%$, with the highest gain of $+15.8$ on \qwen. Across the remaining sentence-level attacks, \pegasus and \parrot, each \semfield variant exceeds the strongest baseline in five of the six model--attack comparisons. More specifically, \semfieldpl has a margin of $2.6$--$8.6\%$ on \pegasus, while on \parrot, it shows $6.6\%$ and $5.0\%$ gain for \llama and \qwen, respectively. The only exception is \opt under \parrot, where \swordstamp scores meager $0.4\%$ higher than \semfieldpl at $94.6\%$.

\begin{table*}[h]
\centering
\scriptsize
\setlength{\tabcolsep}{2.0pt}
\renewcommand{\arraystretch}{1.10}
\caption{Detection results for \llama. Best values in a column are bolded, and the second best are underlined.}
\label{tab:llama32-1b-main-no-nonce}
\resizebox{\textwidth}{!}{%
\begin{tabular}{lccccc|cc|c}
\toprule
Method & \clean & \pegasus & \parrot & \dipper & \llmrewrite & \shortstack{Mean\\\tpro} & \shortstack{Mean\\\tprf} & \shortstack{Corp. PPL $\downarrow$\\(\pplratio)} \\
\midrule
\rowcolor{gray!15}
\multicolumn{9}{@{}l}{\textsc{Token-based watermarks}} \\
\addlinespace[2pt]
\KGW & \best{100.0} / \second{99.8} / \best{100.0} & 97.8 / 84.4 / 90.8 & 98.6 / 89.2 / 93.4 & 93.0 / 57.2 / 69.0 & 70.0 / 13.4 / 22.4 & 68.8 & 75.1 & 8.7 (0.897) \\
\synthid & \best{100.0} / \best{100.0} / \best{100.0} & 91.2 / 48.6 / 72.6 & 90.6 / 50.8 / 72.0 & 73.5 / 12.6 / 31.6 & 57.2 / 4.6 / 16.0 & 43.3 & 58.4 & 7.7 (0.793) \\
\EXP & 99.8 / 99.2 / \second{99.4} & 89.5 / 59.0 / 69.2 & 89.2 / 61.0 / 71.4 & 80.0 / 31.4 / 45.4 & 56.2 / 5.0 / 12.8 & 51.1 & 59.6 & \second{4.5} (\second{0.471}) \\
\SIR & \second{99.9} / 98.8 / \second{99.4} & 95.1 / 82.4 / 88.6 & 96.2 / 81.8 / 87.4 & 93.0 / 71.2 / 79.4 & 82.9 / 27.2 / 44.2 & 72.3 & 79.8 & 7.6 (0.783) \\
\addlinespace[3pt]
\rowcolor{blue!10}
\multicolumn{9}{@{}l}{\textsc{Semantic watermarks}} \\
\addlinespace[2pt]
\semstamp & \second{99.9} / 99.6 / \best{100.0} & 96.6 / 79.2 / 89.6 & 97.5 / 82.2 / 92.6 & 87.0 / 41.6 / 61.6 & 66.7 / 15.8 / 28.2 & 63.7 & 74.4 & 5.7 (0.591) \\
\ksemstamp & 99.4 / 96.5 / 99.2 & 94.7 / 46.0 / 75.4 & 96.6 / 53.2 / 83.5 & 85.9 / 18.1 / 47.7 & 71.7 / 4.4 / 16.8 & 43.7 & 64.5 & 6.1 (0.633) \\
\simmark & 99.2 / 93.0 / 97.0 & 98.2 / 84.6 / 93.0 & 98.7 / 88.2 / 94.4 & 95.3 / 60.8 / 81.6 & 84.2 / 22.4 / 45.2 & 69.8 & 82.2 & 6.1 (0.635) \\
\pmark & 99.4 / 96.2 / 98.2 & 95.6 / 74.8 / 86.6 & 98.0 / 86.4 / 92.4 & 87.1 / 48.6 / 64.6 & 74.6 / 14.2 / 31.6 & 64.0 & 74.7 & \best{3.9} (\best{0.401}) \\
\alimark & 99.8 / 98.0 / \second{99.4} & 97.9 / 80.6 / 90.4 & 98.3 / 85.4 / 93.3 & 92.5 / 58.4 / 71.8 & 89.2 / 37.2 / 57.3 & 71.9 & 82.4 & 11.3 (1.168) \\
\samark & 95.2 / 79.8 / 87.4 & 95.3 / 76.0 / 86.4 & 96.4 / 82.4 / 90.0 & 93.3 / 66.0 / 79.4 & 92.1 / 64.6 / 77.0 & 73.8 & 84.0 & 6.8 (0.708) \\
\swordstamp & 99.7 / 93.2 / 98.3 & 99.1 / 86.8 / 94.2 & 99.3 / 87.4 / 95.8 & 96.4 / 72.2 / 84.3 & 84.4 / 22.8 / 44.9 & 72.5 & 83.5 & 6.7 (0.691) \\
\kswordstamp & 99.7 / 92.6 / 98.6 & 99.0 / 84.0 / 95.6 & 99.2 / 85.4 / 96.2 & 97.7 / 72.4 / 89.6 & 95.5 / 55.6 / \second{80.8} & 78.0 & 92.2 & 6.8 (0.707) \\
\addlinespace[3pt]
\rowcolor{green!12}
\multicolumn{9}{@{}l}{\textsc{Ours}} \\
\addlinespace[2pt]
\semfield & 99.8 / 97.0 / 98.8 & \second{99.4} / \second{93.6} / \second{96.8} & \second{99.5} / \second{93.8} / \second{97.6} & \second{98.9} / \best{89.0} / \second{94.0} & \second{96.2} / \best{68.4} / \second{80.8} & \best{88.4} & \second{93.6} & 6.8 (0.703) \\
\semfieldpl & \best{100.0} / 99.0 / \best{100.0} & \best{99.7} / \best{94.2} / \best{99.0} & \best{99.8} / \best{95.8} / \best{99.2} & \best{99.1} / \second{85.4} / \best{95.8} & \best{97.8} / \second{66.4} / \best{89.2} & \second{88.2} & \best{96.6} & 6.9 (0.715) \\
\bottomrule
\end{tabular}%
}
\end{table*}

\begin{table*}[h]
\centering
\scriptsize
\setlength{\tabcolsep}{2.0pt}
\renewcommand{\arraystretch}{1.10}
\caption{Detection results for \qwen. Best values in a column are bolded, and the second best are underlined.}
\label{tab:qwen3-1.7b-main-no-nonce}
\resizebox{\textwidth}{!}{%
\begin{tabular}{lccccc|cc|c}
\toprule
Method & \clean & \pegasus & \parrot & \dipper & \llmrewrite & \shortstack{Mean\\\tpro} & \shortstack{Mean\\\tprf} & \shortstack{Corp. PPL $\downarrow$\\(\pplratio)} \\
\midrule
\rowcolor{gray!15}
\multicolumn{9}{@{}l}{\textsc{Token-based watermarks}} \\
\addlinespace[2pt]
\KGW & \second{99.9} / \second{99.6} / \second{99.8} & 96.8 / 80.4 / 90.2 & 98.8 / 88.4 / 96.2 & 91.4 / 45.4 / 69.6 & 76.2 / 10.6 / 28.6 & 64.9 & 76.9 & 10.9 (1.125) \\
\synthid & \best{100.0} / \best{99.8} / \second{99.8} & 90.3 / 59.2 / 73.8 & 94.0 / 70.6 / 80.8 & 72.5 / 19.2 / 32.8 & 59.3 / 11.2 / 19.0 & 52.0 & 61.2 & 9.7 (1.007) \\
\EXP & \second{99.9} / \best{99.8} / \second{99.8} & 91.3 / 62.4 / 73.8 & 92.7 / 66.6 / 79.2 & 78.9 / 34.0 / 51.4 & 59.9 / 7.4 / 15.4 & 54.0 & 63.9 & \second{5.2} (\second{0.535}) \\
\SIR & \second{99.9} / 98.8 / 99.6 & 95.6 / 86.0 / 90.4 & 98.4 / 90.8 / 94.2 & 95.2 / 69.0 / 82.2 & 81.7 / 26.8 / 41.4 & 74.3 & 81.6 & 10.4 (1.075) \\
\addlinespace[3pt]
\rowcolor{blue!10}
\multicolumn{9}{@{}l}{\textsc{Semantic watermarks}} \\
\addlinespace[2pt]
\semstamp & 99.6 / 98.2 / 99.0 & 94.5 / 72.4 / 83.8 & 96.5 / 80.0 / 89.8 & 80.8 / 31.0 / 48.0 & 64.5 / 8.8 / 24.0 & 58.1 & 68.9 & 8.5 (0.881) \\
\ksemstamp & 98.9 / 91.6 / 98.7 & 93.6 / 37.7 / 68.5 & 94.4 / 45.4 / 76.4 & 83.3 / 12.8 / 37.3 & 71.3 / 3.6 / 20.9 & 38.2 & 60.4 & 7.6 (0.792) \\
\simmark & 98.6 / 87.0 / 94.8 & 97.8 / 78.0 / 89.6 & 98.1 / 79.6 / 93.0 & 93.8 / 47.4 / 73.2 & 85.7 / 25.4 / 53.4 & 63.5 & 80.8 & 9.0 (0.930) \\
\pmark & 99.5 / 98.6 / 98.8 & 96.3 / 81.2 / 86.6 & 97.8 / 88.2 / 92.8 & 86.6 / 52.2 / 62.8 & 77.0 / 22.6 / 32.4 & 68.6 & 74.7 & \best{4.3} (\best{0.443}) \\
\alimark & \best{100.0} / 99.4 / \best{100.0} & 97.4 / 80.6 / 90.9 & 98.3 / 88.4 / 94.4 & 89.6 / 47.4 / 64.3 & 88.1 / 37.0 / 57.3 & 70.6 & 81.4 & 19.0 (1.970) \\
\samark & 96.9 / 86.2 / 93.2 & 96.8 / 81.9 / 90.0 & 97.6 / 83.7 / 91.4 & 95.6 / 69.5 / 85.3 & 93.9 / 60.2 / 78.7 & 76.3 & 87.7 & 8.5 (0.881) \\
\swordstamp & 99.8 / 96.6 / 99.0 & 99.1 / 85.7 / 95.9 & \second{99.5} / 90.0 / \second{97.6} & 96.9 / 65.2 / 86.5 & 85.2 / 18.0 / 43.1 & 71.1 & 84.4 & 9.3 (0.961) \\
\kswordstamp & 99.8 / 94.8 / 99.2 & 98.9 / 79.5 / 94.8 & 99.4 / 86.3 / 96.2 & 97.9 / 74.5 / \second{93.3} & \second{96.9} / 56.0 / \second{85.5} & 78.2 & \second{93.8} & 9.0 (0.936) \\
\addlinespace[3pt]
\rowcolor{green!12}
\multicolumn{9}{@{}l}{\textsc{Ours}} \\
\addlinespace[2pt]
\semfield & 99.8 / 97.2 / 98.0 & \second{99.4} / \second{93.0} / \second{97.2} & \second{99.5} / \second{93.8} / 97.2 & \second{98.8} / \best{87.6} / 92.8 & 96.6 / \second{71.4} / 82.6 & \second{88.6} & 93.6 & 8.6 (0.889) \\
\semfieldpl & \second{99.9} / \second{99.6} / \second{99.8} & \best{99.7} / \best{94.6} / \best{98.8} & \best{99.8} / \best{95.8} / \best{99.4} & \best{98.9} / \second{85.8} / \best{95.0} & \best{97.9} / \best{76.0} / \best{89.2} & \best{90.4} & \best{96.4} & 8.8 (0.910) \\
\bottomrule
\end{tabular}%
}
\end{table*}

\begin{table*}[h]
\centering
\scriptsize
\setlength{\tabcolsep}{2.0pt}
\renewcommand{\arraystretch}{1.10}
\caption{Detection results for \opt. Best values in a column are bolded, and the second best are underlined.}
\label{tab:opt-1.3b-main-no-nonce}
\resizebox{\textwidth}{!}{%
\begin{tabular}{lccccc|cc|c}
\toprule
Method & \clean & \pegasus & \parrot & \dipper & \llmrewrite & \shortstack{Mean\\\tpro} & \shortstack{Mean\\\tprf} & \shortstack{Corp. PPL $\downarrow$\\(\pplratio)} \\
\midrule
\rowcolor{gray!15}
\multicolumn{9}{@{}l}{\textsc{Token-based watermarks}} \\
\addlinespace[2pt]
\KGW & 99.8 / \second{99.8} / \second{99.8} & 97.6 / 72.4 / 90.2 & 98.1 / 80.4 / 93.6 & 91.2 / 35.0 / 62.2 & 68.1 / 4.2 / 16.6 & 58.4 & 72.5 & 8.5 (0.876) \\
\synthid & \best{100.0} / \best{100.0} / \best{100.0} & 90.0 / 53.8 / 69.4 & 90.4 / 58.8 / 72.8 & 76.1 / 22.0 / 38.0 & 58.3 / 7.8 / 16.4 & 48.5 & 59.3 & 8.0 (0.832) \\
\EXP & 99.7 / 98.8 / 99.6 & 88.6 / 48.7 / 68.5 & 89.0 / 52.7 / 70.3 & 76.2 / 21.6 / 39.3 & 55.5 / 3.8 / 11.4 & 45.1 & 57.8 & \second{5.2} (\second{0.541}) \\
\SIR & 99.5 / 98.6 / 98.8 & 96.4 / 85.8 / 91.0 & 97.4 / 86.0 / 92.2 & 91.5 / 58.8 / 74.0 & 85.2 / 25.0 / 47.8 & 70.8 & 80.8 & 8.3 (0.856) \\
\addlinespace[3pt]
\rowcolor{blue!10}
\multicolumn{9}{@{}l}{\textsc{Semantic watermarks}} \\
\addlinespace[2pt]
\semstamp & \second{99.9} / 99.6 / \best{100.0} & 95.1 / 74.2 / 86.4 & 96.8 / 81.0 / 89.6 & 86.1 / 38.0 / 59.0 & 66.4 / 13.2 / 24.0 & 61.2 & 71.8 & 8.5 (0.884) \\
\ksemstamp & 99.7 / 92.6 / \second{99.8} & 94.7 / 42.4 / 75.5 & 95.4 / 41.7 / 77.1 & 87.0 / 14.9 / 47.7 & 69.6 / 3.4 / 14.8 & 39.0 & 63.0 & 6.5 (0.668) \\
\simmark & 98.6 / 89.4 / 94.2 & 97.1 / 77.8 / 89.4 & 97.5 / 79.8 / 91.2 & 93.8 / 55.8 / 75.8 & 86.3 / 30.2 / 52.8 & 66.6 & 80.7 & 6.7 (0.698) \\
\pmark & \second{99.9} / 99.4 / \second{99.8} & 95.7 / 76.4 / 86.0 & 99.1 / 93.2 / 97.0 & 86.2 / 48.6 / 63.0 & 69.3 / 13.6 / 26.8 & 66.2 & 74.5 & \best{4.3} (\best{0.442}) \\
\alimark & 99.8 / 99.4 / 99.5 & 98.5 / 87.6 / 94.9 & 98.8 / 93.4 / 96.0 & 92.6 / 55.0 / 71.8 & 91.1 / 48.0 / 65.8 & 76.7 & 85.6 & 8.5 (0.878) \\
\samark & 98.9 / 90.2 / 97.4 & 98.7 / 86.8 / 94.6 & 99.1 / 91.6 / 97.0 & 96.8 / 72.8 / 87.6 & 95.4 / 67.2 / 82.2 & 81.7 & 91.8 & 6.7 (0.697) \\
\swordstamp & \best{100.0} / \second{99.8} / \second{99.8} & \second{99.5} / 92.0 / \second{97.6} & \second{99.7} / \best{95.0} / \second{98.8} & 97.5 / 76.6 / 87.8 & 86.5 / 20.8 / 39.9 & 76.8 & 84.8 & 6.7 (0.689) \\
\kswordstamp & \second{99.9} / 96.0 / \best{100.0} & 99.1 / 81.8 / 96.6 & 99.5 / 89.0 / 98.2 & 98.2 / 73.6 / 92.4 & 96.1 / 49.6 / \second{83.7} & 78.0 & \second{94.2} & 7.0 (0.725) \\
\addlinespace[3pt]
\rowcolor{green!12}
\multicolumn{9}{@{}l}{\textsc{Ours}} \\
\addlinespace[2pt]
\semfield & 99.8 / 97.4 / 99.0 & \second{99.5} / \second{92.6} / 96.8 & 99.5 / 93.2 / 97.2 & \second{99.0} / \best{88.8} / \second{93.8} & \second{96.6} / \best{69.4} / 82.0 & \best{88.3} & 93.8 & 6.3 (0.656) \\
\semfieldpl & \best{100.0} / \second{99.8} / \best{100.0} & \best{99.8} / \best{94.6} / \best{99.6} & \best{99.8} / \second{94.6} / \best{99.4} & \best{99.1} / \second{82.6} / \best{95.4} & \best{97.5} / \second{69.2} / \best{87.6} & \second{88.2} & \best{96.4} & 6.4 (0.661) \\
\bottomrule
\end{tabular}%
}
\end{table*}

\begin{wraptable}{r}{0.48\textwidth}
\centering
\small
\setlength{\tabcolsep}{4pt}
\caption{Corpus perplexity compared with the median across 12 baselines. PPL column reports \semfield / \semfieldpl}
\label{tab:semfield-baseline-median-ppl}
\begin{tabular}{lcc}
\toprule
Model & Baseline median & Corp. PPL $\downarrow$ \\
\midrule
\llama & 6.75 & 6.8 / 6.9 \\
\qwen  & 9.00 & 8.6 / 8.8 \\
\opt   & 6.85 & 6.3 / 6.4 \\
\bottomrule
\end{tabular}
\end{wraptable}

Lastly, \semfield and \semfieldpl achieve substantially stronger detection without moving into a high-perplexity regime. As seen from \Cref{tab:semfield-baseline-median-ppl}, both variants have Corpus PPL close to the baseline median on \llama and below it on \qwen and \opt, placing them within the same general perplexity range as the methods we compare against.

\paragraph{Effect of Polarity-Locking (PL).}
PL allows generation to reinforce either orientation of the keyed field, while \semfield always reinforces the positive orientation. This change improves \auroc and \tprf\fpr in every one of the five conditions on all three models. Averaged over the five conditions, PL raises \tprf\fpr by $3.0$, $2.8$, and $2.6\%$ compared to its one-sided variant. It also improves \tpro\fpr in ten of the fifteen model--condition comparisons. Interestingly, the strict $1\%$ operating point does not favor polarity-locking in every condition in our current results. 

\begin{wraptable}{r}{0.48\textwidth}
\centering
\small
\setlength{\tabcolsep}{4pt}
\caption{Mean document scores before and after \llmrewrite. $\text{Score loss} = \text{Clean} - \text{Rewritten}$. Each cell reports \semfield / \semfieldpl. Bold marks the higher score or, for score loss, the lower loss.}
\label{tab:llm-rewrite-score-retention}
\resizebox{\linewidth}{!}{%
\begin{tabular}{lccc}
\toprule
Model & Clean & Rewritten & Score loss $\downarrow$ \\
\midrule
\llama & 4.475 / \textbf{4.632}
       & 3.042 / \textbf{3.221}
       & 1.433 / \textbf{1.411} \\
\qwen  & 4.540 / \textbf{4.726}
       & 3.194 / \textbf{3.380}
       & \textbf{1.346 / 1.346} \\
\opt   & 4.418 / \textbf{4.610}
       & 3.089 / \textbf{3.247}
       & \textbf{1.329} / 1.363 \\
\bottomrule
\end{tabular}
}%
\end{wraptable}

To understand this difference further, we first compare the mean document score before and after \llmrewrite for both variants. From \Cref{tab:llm-rewrite-score-retention}, we find that both variants display a very similar score drop under rewriting, indicating that \semfieldpl maintains comparable score stability under this attack. Next, analyzing the cutoff thresholds reveals the main difference. \semfield uses only the positive tail of the null scores to set its $1\%$ \fpr cutoff. \semfieldpl uses both the positive and negative tails, giving it a higher empirical cutoff of $\sim2.81$--$2.83$, compared with $\sim2.56$ for \semfield. This higher threshold leaves more borderline rewritten outputs below the PL detection cutoff on \llama and \opt, where its \tpro is lower by $2.0$ and $0.2\%$. On \qwen, more PL outputs clear the cutoff, giving it a $4.6\%$ lead. At $5\%$ \fpr, the empirical cutoffs are closer in this evaluation, and \semfieldpl achieves higher \llmrewrite \tpr on all three models by $8.4$, $6.6$, and $5.6\%$. Hence, we conclude that the drop in \tpr of PL under \llmrewrite was not due to its lack of robustness, but due to its stricter two-sided cutoff at $1\%$ \fpr.

Lastly, the perplexity cost of \semfieldpl is modestly higher than the former, with a difference of at most $+0.1$ in all cases. However, we reiterate that despite the minor difference, both techniques lie in proximity to the median of all the baselines, indicating no abnormal generation style. We therefore report both variants as regular \semfield establishes the strength of the continuous field itself, while \semfieldpl makes fuller use of its two possible orientations.

\paragraph{Additional Quality Evaluation.}
\begin{wraptable}{r}{0.48\textwidth}
\centering
\setlength{\tabcolsep}{3pt}
\caption{MAUVE scores against matched human-reference continuations.
Each value is the mean over three clustering seeds.}
\label{tab:mauve-all-models}
\tiny
\begin{tabular}{lccc}
\toprule
Method & \llama & \qwen & \opt \\
\midrule
\rowcolor{gray!15}
\multicolumn{4}{@{}l}{\textsc{Token-based watermarks}} \\
\addlinespace[2pt]
\KGW          & 0.020 & 0.017 & 0.014 \\
\synthid      & 0.013 & 0.017 & 0.023 \\
\EXP          & 0.007 & 0.008 & 0.012 \\
\SIR          & 0.029 & 0.011 & 0.016 \\
\addlinespace[3pt]
\rowcolor{blue!10}
\multicolumn{4}{@{}l}{\textsc{Semantic watermarks}} \\
\addlinespace[2pt]
\semstamp     & 0.240 & 0.332 & 0.152 \\
\ksemstamp    & 0.240 & 0.349 & 0.117 \\
\simmark      & 0.439 & 0.319 & 0.347 \\
\pmark        & 0.371 & 0.415 & 0.479 \\
\alimark      & 0.396 & 0.115 & 0.490 \\
\samark       & 0.472 & 0.341 & 0.559 \\
\swordstamp   & 0.302 & 0.293 & 0.083 \\
\kswordstamp  & 0.192 & 0.182 & 0.072 \\
\midrule
\rowcolor{green!12}
\multicolumn{4}{@{}l}{\textsc{Ours}} \\
\addlinespace[2pt]
\semfield     & \underline{0.800} & \underline{0.584} & \underline{0.699} \\
\semfieldpl   & \textbf{0.817} & \textbf{0.656} & \textbf{0.743} \\
\bottomrule
\end{tabular}
\end{wraptable}

We compare generated continuations with matched human-reference continuations using MAUVE \citep{NEURIPS2021_260c2432}. Since the clustering step of MAUVE is a stochastic process, we report the mean over three seeds. From \Cref{tab:mauve-all-models}, we find that \semfield and \semfieldpl obtain the two highest scores for each of the three models. \semfieldpl scores $0.817$, $0.656$, and $0.743$, compared with $0.472$, $0.415$, and $0.559$ for the strongest baseline on each model. The regular \semfield variant also exceeds every baseline, scoring $0.800$, $0.584$, and $0.699$. Thus, the strong detection results of both variants are accompanied by generated-text distributions that are closer to the human-reference continuations under this metric. Since MAUVE evaluates distributions rather than individual responses, we consider it as a secondary metric alongside perplexity when assessing generation quality.

\subsection{Structural Robustness}
Here, we mainly compare our methods with the three most recent and best competitors from our cohort -- \samark, \swordstamp, and \kswordstamp.

\Cref{fig:struct-rob} reports \tpro as attack strength increases from $0$ to $80\%$. Here, we mainly compare our methods with the three most recent and best competitors from our cohort -- \samark, \swordstamp, and \kswordstamp, as we find that most of the other baselines drop rapidly compared to our methods. \semfieldpl retains substantial detection under severe insertion and deletion. At $80\%$ insertion, it obtains $70.4$--$75.2\%$ \tpro across the three models, compared with $47.8$--$62.2\%$ for \semfield. At the same deletion strength, the respective ranges are $79.2$--$84.2\%$ and $71.8$--$80.6\%$. These results are consistent with the aggregate-perturbation analysis presented in \Cref{sec:struct-rob} as detection can persist when the edited aggregate retains enough of its keyed direction.

The considered baselines reveal different strengths. For instance, \swordstamp and \kswordstamp are particularly resistant to insertion. At $80\%$ strength, their \tpro remains at $91.5$--$98.6\%$ and $90.8$--$93.8\%$, respectively, close to their clean levels. However, their advantage does not extend to severe deletion. At $80\%$ deletion, \swordstamp falls to $14.6$--$27.8\%$ and \kswordstamp to $8.2$--$10.8\%$, while both \semfield variants retain substantially higher detection. \samark declines more sharply under insertion, reaching $0\%$ at $80\%$ strength, and retains $20.6$--$22.8\%$ under deletion.

For \semfield and \semfieldpl, permuting intact sentences leaves their aggregate and detector score exactly unchanged, as demonstrated in \Cref{sec:struct-rob}. Their reordering curves are therefore constant at their clean detection levels of $96.8$--$98.0\%$ and $99.6$--$100.0\%$, respectively. \samark, \swordstamp, and \kswordstamp also demonstrate flat reordering curves in this experiment owing to their design, but lie below both our variants across all 3 models.

Dilution best separates the two \semfield variants across all 4 attacks. At $80\%$ strength, \semfield retains $37.4$ -$52.8\%$ \tpro, compared with $21.4$--$28.6\%$ for \semfieldpl. At that strength, \swordstamp retains $23.6$--$28.0\%$, \kswordstamp retains $10.4$--$10.8\%$, and \samark falls to $0.2\%$. Following the aggregate-perturbation, both algorithms show good robustness till 40\% attack strength, after which the one-sided variant dominates and \semfieldpl drops sharply. Despite this, \semfieldpl still outperforms most baselines considered along the curve. Overall, in this setting, the one-sided \semfield variant is more resistant to that change, whereas polarity locking gives stronger insertion and generally stronger deletion performance.

Surprisingly, \simmark exhibits an unusual upward dilution curve, reaching $99.8$--$100.0\%$ \tpro at $80\%$ strength. Upon a deeper investigation, we find that its detector scores neighboring sentence pairs, and dilution creates additional pairs
that can raise the document score\footnote{analogous to \semfield's \(Z_K\)} used for detection. To appropriately highlight this effect, we also dilute the 500 null documents to check if the sentence pairs cause a similar gain on them. As expected, their mean \simmark score rises from $0.22$ to $12.39$ at $80\%$ strength, indicating that dilution raises scores even when no watermark is present. 

\begin{figure*}[t]
\centering
\includegraphics[width=\textwidth]{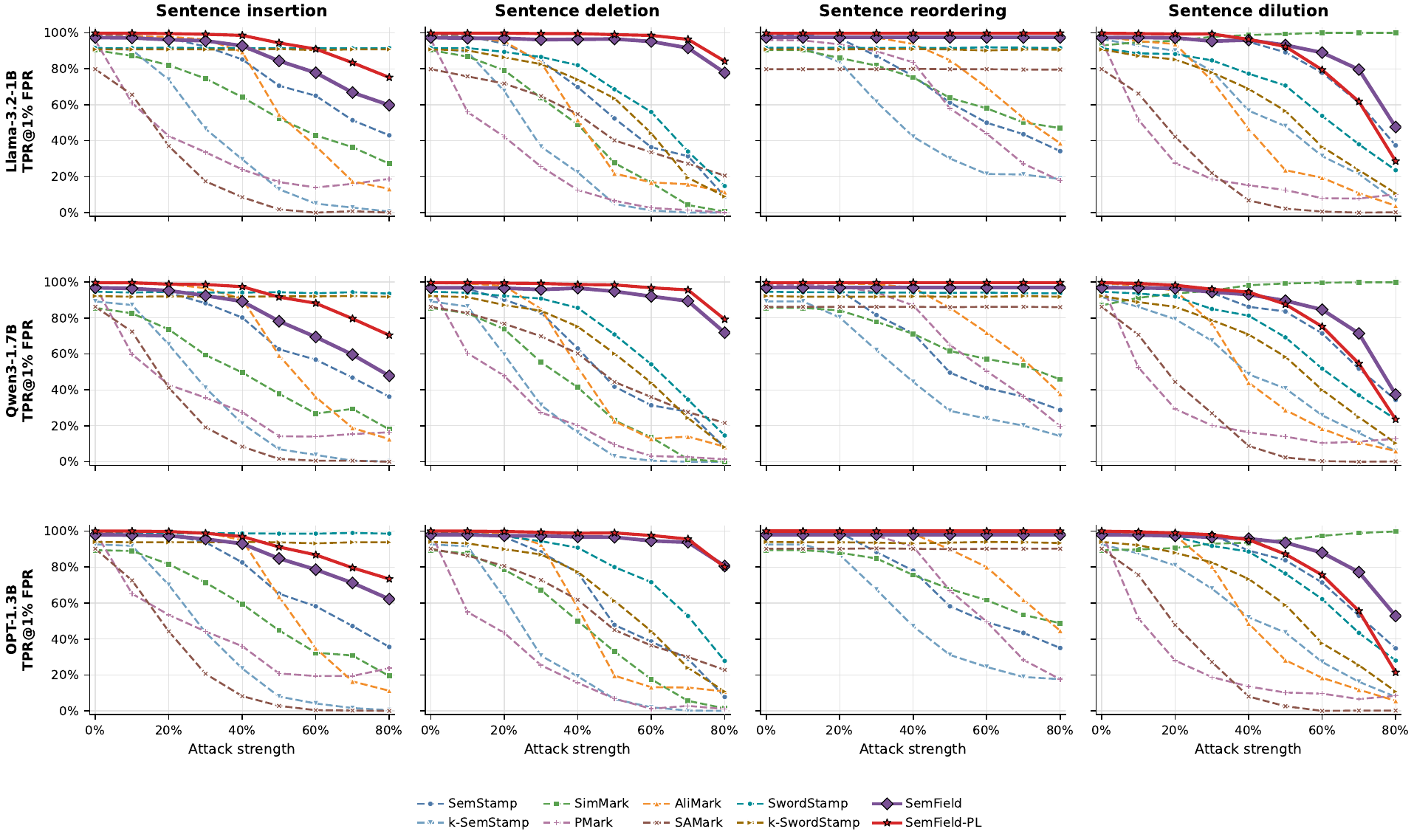}
\caption{Structural robustness under sentence insertion, deletion,
reordering, and dilution. Curves report \tpro\fpr.}
\label{fig:struct-rob}
\end{figure*}

\subsection{Effect of Candidate Diversity.}
\semfield variants operate on selecting that candidate that best aligns with the secret keyed vector and maximally improves the score of the resulting document. Hence, the diversity of candidates in a crucial as a collapsed set can result in low-quality additions to the document. We study the effect of candidate diversity and associated side-effects along two arms as follows. We limit this study to \semfieldpl as it shows a more \textit{broader} improvement across all models and conditions.

\paragraph{Temperature.}
Sampling temperature changes the candidate sentences available at a fixed search budget. A high temperature produces more unseen or rare generations by selecting tokens towards the tail of the distribution. \Cref{tab:semfield-temperature-ablation-no-nonce-roc500} compares temperatures $0.7$ and $1.0$ on the \llama evaluation with the rest of the generation parameters unchanged. At temperature $1.0$, mean \tpro rises from $88.2\%$ to $97.4\%$. The largest improvement occurs under \llmrewrite, where \tpro jumps from $66.4\%$ to $91.8\%$ ($+25.4$pts). Mean \tprf rises from $96.6\%$ to $99.5\%$.

This additional detection robustness comes with substantially higher perplexity. Corpus PPL increases from $6.90$ to $34.66$, and \pplratio from $0.715$ to $3.591$, which is approximately a $\times$ gain in both cases. Hence, we conclude that the higher temperature produces a more diverse candidate set, allowing new sentences to better align with the field orientation and build a stronger watermark score. However, since there is no other restriction on them, the selected candidates can be highly unnatural and awkwardly phrased, reducing the overall quality.
\begin{table*}[h]
\centering
\scriptsize
\setlength{\tabcolsep}{2.5pt}
\renewcommand{\arraystretch}{1.08}
\caption{Effect of sampling temperature on \semfieldpl. Best values are bolded.}
\label{tab:semfield-temperature-ablation-no-nonce-roc500}
\resizebox{\textwidth}{!}{%
\begin{tabular}{lccccc|cc|c}
\toprule
Temp.
& \clean
& \pegasus
& \parrot
& \dipper
& \llmrewrite
& \shortstack{Mean\\\tpro}
& \shortstack{Mean\\\tprf}
& \shortstack{Corpus PPL $\downarrow$\\(\pplratio)} \\
\midrule
$0.7$
& \best{100.0} / 99.0 / \best{100.0}
& 99.7 / 94.2 / 99.0
& 99.8 / 95.8 / 99.2
& 99.1 / 85.4 / 95.8
& 97.8 / 66.4 / 89.2
& 88.2
& 96.6
& \best{6.90 (0.715)} \\
$1.0$
& \best{100.0} / \best{100.0} / \best{100.0}
& \best{100.0} / \best{99.8} / \best{100.0}
& \best{100.0} / \best{99.6} / \best{100.0}
& \best{99.7} / \best{95.8} / \best{99.0}
& \best{99.7} / \best{91.8} / \best{98.6}
& \best{97.4}
& \best{99.5}
& 34.66 (3.591) \\
\bottomrule
\end{tabular}%
}
\end{table*}

\paragraph{Increasing candidates.}
\begin{wrapfigure}{r}{0.48\textwidth}
\centering
\includegraphics[width=\linewidth]{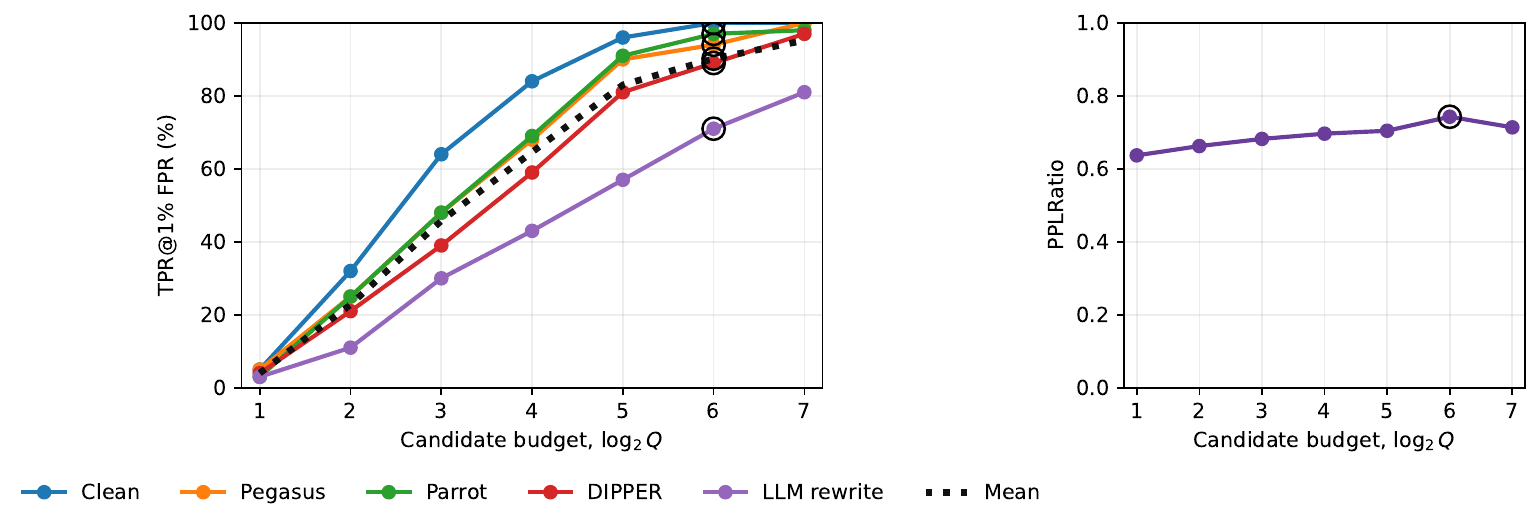}
\caption{\tpro vs. $Q$.}
\label{fig:q-sweep}
\end{wrapfigure}
At each generation step, both algorithms select from $Q$ candidate sentences using the resulting document score. Increasing $Q$ gives it more opportunities to find a candidate that reinforces the keyed direction, while requiring more candidate generations and evaluations. \Cref{fig:q-sweep} shows the effect on 100 documents with \llama as $Q$ doubles from $2$ to $128$ under the standard generation configuration.

Clean detection rises rapidly at smaller budgets and begins to level off at larger ones ($\geq 32$). Mean \tpro increases from $4.0\%$ at $Q=2$ to $90.2\%$ at $Q=64$, then to $95.2\%$ at $Q=128$. At $Q=64$, \clean reaches $100\%$, while \pegasus, \parrot, \dipper, and \llmrewrite obtain $94\%$, $97\%$, $89\%$, and $71\%$, respectively. Interestingly, \llmrewrite shows an almost linear trend against the $\log_2 Q$, with \tpro reaching $80.2\%$ at $Q=128$. Investigating this further, the best-fit line displays a slope of $13.61$ with an $R^2$ of $0.994$. This indicates significant diminishing returns with $Q$, as based on the current trend, the best-fit extrapolation predicts a $Q \geq 182$ for \tpro$\geq 90\%$.

Across the sweep, \pplratio stays between $0.637$ and $0.743$, while displaying a minor upward trend from $Q=2$ to $Q=64$, with a slight dip back to $0.714$ at $Q=128$. Overall, we argue that it shows no significant monotonic improvement with the candidate budget, unlike the \tpr scores. Therefore, there is no impact on the quality of generations at higher $Q$, but we are mainly constrained by the computation costs.

\subsection{Practical Deployment Scenarios}
\paragraph{Detection with Independently/Empirically Calibrated Thresholds.} 
In a \textit{practical} deployment, a detector must identify watermarked text using a threshold established before the document is received, even when its edit history is unknown. Hence, we complement the ROC-based results presented above with an evaluation that \textit{estimates} and \textit{freezes} the threshold of each detector \textit{a priori} across clean and attacked texts. To estimate these thresholds, we select 10,000 independent null documents from \textsc{C4}, along with 2,000 held-out documents for validation. For each model--method pair, we estimate the cut-off score by running its native detector on the 10,000 calibration documents and sorting the resulting scores from highest to lowest. For the nominal $1\%$ \fpr, we place the cutoff immediately above the 101$^{st}$ highest score, allowing at most 100 calibration documents to pass the threshold. Similarly, the nominal $5\%$ cutoff is placed immediately above the 501$^{st}$ highest score, allowing at most 500 detections. We then freeze these cutoffs and use the 2,000 held-out documents to measure the realized \fpr on unseen null text. Post-verification, the same frozen cutoffs are subsequently applied to the watermarked documents in every clean and attacked condition.

\Cref{tab:llama32-1b-appendix-null10k-current,tab:qwen3-1.7b-appendix-null10k-current,tab:opt-1.3b-appendix-null10k-current} demonstrate the aforementioned results, and we find that both \semfield variants once again occupy the top two positions in mean \tpro and \tprf across all three models. At the nominal $1\%$ \fpr, \semfieldpl achieves mean detection rates of $89.3\%$, $91.2\%$, and $89.5\%$. The strongest baseline, \swordstamp, obtains $72.5\%$, $73.4\%$, and $76.8\%$. Regular \semfield also remains ahead of every baseline, with means of $87.1\%$, $87.8\%$, and $87.1\%$. Thus, the overall detection advantage observed in the main evaluation also holds when thresholds are frozen using a substantially larger independent null corpus. The advantage is well established under full-document rewriting as well. Against \llmrewrite, \semfieldpl obtains $68.6\%$, $78.0\%$, and $71.8\%$ \tpro, compared with $43.6\%$, $47.0\%$, and $46.6\%$ for the strongest baseline, \samark. PL also improves mean \tpro over regular \semfield by $2.2$, $3.4$, and $2.4\%$, and achieves the highest mean \tprf of $96.8\%$, $96.7\%$, and $96.6\%$. Unlike the previous evaluation, PL has a clear advantage over the \semfield on \llmrewrite with respect to \tpro with a gain of $2.4$, $9.0$, and $4.4\%$, indicating its consistent advantage in a practical use case.

\paragraph{Cross-Key Detection.} 
Lastly, consider another scenario where $N$ providers employ \semfieldpl, with their own keys, $K_0, K_1 \dots K_{N-1}$. Following this setup, we now analyze how often a document generated with $K_i$ gets falsely detected by another key $K_j$. In an \textit{ideal} scenario, every document should be detected by its own key, and not by another. We simulate this with $100$ unique keys, and generate the 500 documents with \llama for each key. Next, we evaluate every generated document under all 100 keys, giving $5$ million document--key evaluations. Following from the previous experiment, each key uses its own detection threshold calibrated to a nominal $1\%$ \fpr on the same $10{,}000$ independent null documents. 

Averaging the diagonal entries of the resulting detection/confusion matrix ($K_{\text{gen}} = K_{\text{det}}$), the generation key detects $99.518\%$ of its documents. Similarly, averaging the off-diagonal entries ($K_{\text{gen}} \neq K_{\text{det}}$), an unrelated key detects these documents in $1.065\%$ of cases, which is close to the nominal false-positive operating point. These results show high detection under the generation key and cross-key detection near the expected false-alarm rate.

\section{Conclusion}
In this work, we proposed \semfield, a simple, linear yet robust semantic watermarking strategy for LLMs. \semfield uses a continuous Gaussian vector as a carrier, and iteratively selects new sentences that best improve the normalized aggregate document projection onto it. We also extend it to a two-sided variant, \semfieldpl, which first determines the orientation of the vector to align to using the first candidate, and then iteratively selects new sentences that best improve the aggregate document projection only along that direction. With an extensive comparison across 12 baselines and 4 paraphrasing attacks, we show the efficacy of our methods across 3 models, in which both variants occupy the top two places in terms of mean \tpro and MAUVE scores, while displaying a perplexity score close to the median of all the baselines. Further, we also discussed the robustness of the variants to various structural changes that do not involve any paraphrasing, and demonstrate an exact invariance to reordering and provable score stability under small perturbations to the content. Finally, we hope that our work inspires more studies on robust watermarking strategies for LLMs, which are well detectable under rewriting, and also maintain sufficient quality during generation.
\section{Limitations and Scope of Work}
Our work is subject to the following main limitations. We openly acknowledge that the watermarking experiments were targeted towards vanilla language modeling and base models, but this was done to align with the provided baselines and their evaluations. Similarly, following previous works, \semfield is also built and tested only for English, and it uses the sentence-based structure. We acknowledge that this might not be feasible for other languages where a clear sentence structure/boundary is not available (Ex, Chinese). Additionally, we only test our framework across three models. We hold the singular embedding model and dataset constant across all runs, as most baselines have reported it in their implementations, but all our experiments are agnostic to it. Due to computational constraints, we select a 500-document cohort, which, however translated to $4\times$ the number of evaluations accounting for the attacks. Future work in this direction can include more efficient sampling strategies and multi-dimensional candidate selection to account for safety, helpfulness, correctness, and low-entropy generations, such as math or code for RLVR and thinking models. Finally, we would like to highlight that we \textbf{do not} claim this work as an overall SoTA across all methods, models, and use cases, but present it as a \textbf{proof-of-concept} for a new and simple technique in line with previous studies.

\bibliography{iclr2027_conference}

@InProceedings{pmlr-v202-kirchenbauer23a,
  title = 	 {A Watermark for Large Language Models},
  author =       {Kirchenbauer, John and Geiping, Jonas and Wen, Yuxin and Katz, Jonathan and Miers, Ian and Goldstein, Tom},
  booktitle = 	 {Proceedings of the 40th International Conference on Machine Learning},
  pages = 	 {17061--17084},
  year = 	 {2023},
  editor = 	 {Krause, Andreas and Brunskill, Emma and Cho, Kyunghyun and Engelhardt, Barbara and Sabato, Sivan and Scarlett, Jonathan},
  volume = 	 {202},
  series = 	 {Proceedings of Machine Learning Research},
  month = 	 {23--29 Jul},
  publisher =    {PMLR},
  url = 	 {https://proceedings.mlr.press/v202/kirchenbauer23a.html}
}

@article{
kuditipudi2024robust,
title={Robust Distortion-free Watermarks for Language Models},
author={Rohith Kuditipudi and John Thickstun and Tatsunori Hashimoto and Percy Liang},
journal={Transactions on Machine Learning Research},
issn={2835-8856},
year={2024},
url={https://openreview.net/forum?id=FpaCL1MO2C},
note={}
}

@article{Dathathri2024,
    author={Dathathri, Sumanth and See, Abigail and Ghaisas, Sumedh and Huang, Po-Sen and McAdam, Rob and Welbl, Johannes and Bachani, Vandana and Kaskasoli, Alex and Stanforth, Robert and Matejovicova, Tatiana and Hayes, Jamie and Vyas, Nidhi and Merey, Majd Al and Brown-Cohen, Jonah and Bunel, Rudy and Balle, Borja and Cemgil, Taylan and Ahmed, Zahra and Stacpoole, Kitty and Shumailov, Ilia and Baetu, Ciprian and Gowal, Sven and Hassabis, Demis and Kohli, Pushmeet},
    title={Scalable watermarking for identifying large language model outputs},
    journal={Nature},
    year={2024},
    month={Oct},
    day={01},
    volume={634},
    number={8035},
    pages={818-823},
    issn={1476-4687},
    doi={10.1038/s41586-024-08025-4},
    url={https://doi.org/10.1038/s41586-024-08025-4}
}

@inproceedings{
liu2024a,
title={A Semantic Invariant Robust Watermark for Large Language Models},
author={Aiwei Liu and Leyi Pan and Xuming Hu and Shiao Meng and Lijie Wen},
booktitle={The Twelfth International Conference on Learning Representations},
year={2024},
url={https://openreview.net/forum?id=6p8lpe4MNf}
}

@inproceedings{hou-etal-2024-k,
    title = "k-{S}em{S}tamp: A Clustering-Based Semantic Watermark for Detection of Machine-Generated Text",
    author = "Hou, Abe  and
      Zhang, Jingyu  and
      Wang, Yichen  and
      Khashabi, Daniel  and
      He, Tianxing",
    editor = "Ku, Lun-Wei  and
      Martins, Andre  and
      Srikumar, Vivek",
    booktitle = "Findings of the Association for Computational Linguistics: ACL 2024",
    month = aug,
    year = "2024",
    address = "Bangkok, Thailand",
    publisher = "Association for Computational Linguistics",
    url = "https://aclanthology.org/2024.findings-acl.98/",
    doi = "10.18653/v1/2024.findings-acl.98",
    pages = "1706--1715"
}

@inproceedings{hou-etal-2024-semstamp,
    title = "{S}em{S}tamp: A Semantic Watermark with Paraphrastic Robustness for Text Generation",
    author = "Hou, Abe  and
      Zhang, Jingyu  and
      He, Tianxing  and
      Wang, Yichen  and
      Chuang, Yung-Sung  and
      Wang, Hongwei  and
      Shen, Lingfeng  and
      Van Durme, Benjamin  and
      Khashabi, Daniel  and
      Tsvetkov, Yulia",
    editor = "Duh, Kevin  and
      Gomez, Helena  and
      Bethard, Steven",
    booktitle = "Proceedings of the 2024 Conference of the North American Chapter of the Association for Computational Linguistics: Human Language Technologies (Volume 1: Long Papers)",
    month = jun,
    year = "2024",
    address = "Mexico City, Mexico",
    publisher = "Association for Computational Linguistics",
    url = "https://aclanthology.org/2024.naacl-long.226/",
    doi = "10.18653/v1/2024.naacl-long.226",
    pages = "4067--4082"
}

@inproceedings{dabiriaghdam-wang-2025-simmark,
    title = "{S}im{M}ark: A Robust Sentence-Level Similarity-Based Watermarking Algorithm for Large Language Models",
    author = "Dabiriaghdam, Amirhossein  and
      Wang, Lele",
    editor = "Christodoulopoulos, Christos  and
      Chakraborty, Tanmoy  and
      Rose, Carolyn  and
      Peng, Violet",
    booktitle = "Proceedings of the 2025 Conference on Empirical Methods in Natural Language Processing",
    month = nov,
    year = "2025",
    address = "Suzhou, China",
    publisher = "Association for Computational Linguistics",
    url = "https://aclanthology.org/2025.emnlp-main.1567/",
    doi = "10.18653/v1/2025.emnlp-main.1567",
    pages = "30785--30806",
    ISBN = "979-8-89176-332-6"
}

@inproceedings{
huo2026pmark,
title={{PM}ark: Towards Robust and Distortion-free Semantic-level Watermarking with Channel Constraints},
author={Jiahao Huo and Shuliang Liu and Bin Wang and Junyan Zhang and Yibo Yan and Aiwei Liu and Xuming Hu and Mingxun Zhou},
booktitle={The Fourteenth International Conference on Learning Representations},
year={2026},
url={https://openreview.net/forum?id=EhDgP69DJG}
}

@inproceedings{
li2026alimark,
title={AliMark: Enhancing Robustness of Sentence-Level Watermarking Against Text Paraphrasing},
author={Yuexin Li and Wenjie Qu and Linyu Wu and Yulin Chen and Yufei He and Tri Cao and Bryan Hooi and Jiaheng Zhang},
booktitle={Forty-third International Conference on Machine Learning},
year={2026},
url={https://openreview.net/forum?id=jQmlwZSPuw}
}

@misc{huo2026samarkselfanchoredtextwatermarking,
      title={SAMark: A Self-Anchored Text Watermarking with Paragraph-Level Paraphrase Robustness}, 
      author={Jiahao Huo and Wenjie Qu and Yibo Yan and Kening Zheng and Jiaheng Zhang and Xuming Hu and Philip S. Yu and Mingxun Zhou},
      year={2026},
      eprint={2605.25796},
      archivePrefix={arXiv},
      primaryClass={cs.CR},
      url={https://arxiv.org/abs/2605.25796}, 
}

@misc{diaa2026semanticwatermarkingorderrobustdetection,
      title={Semantic Watermarking with Order-Robust Detection over Sub-sentence Units}, 
      author={Abdulrahman Diaa and Jonathan Petit and Florian Kerschbaum},
      year={2026},
      eprint={2608.27666},
      archivePrefix={arXiv},
      primaryClass={cs.CR},
      url={https://arxiv.org/abs/2608.27666}, 
}

@misc{xue2025proenablingpreciserobust,
      title={PRO: Enabling Precise and Robust Text Watermark for Open-Source LLMs}, 
      author={Jiaqi Xue and Yifei Zhao and Mansour Al Ghanim and Shangqian Gao and Ruimin Sun and Qian Lou and Mengxin Zheng},
      year={2025},
      eprint={2510.23891},
      archivePrefix={arXiv},
      primaryClass={cs.CR},
      url={https://arxiv.org/abs/2510.23891}, 
}

@inproceedings{3698900.3699002,
author = {Zhang, Ruisi and Hussain, Shehzeen Samarah and Neekhara, Paarth and Koushanfar, Farinaz},
title = {REMARK-LLM: a robust and efficient watermarking framework for generative large language models},
year = {2024},
isbn = {978-1-939133-44-1},
publisher = {USENIX Association},
address = {USA},
booktitle = {Proceedings of the 33rd USENIX Conference on Security Symposium},
articleno = {102},
numpages = {18},
location = {Philadelphia, PA, USA},
series = {SEC '24}
}

@inproceedings{
xu2025learning,
title={Learning to watermark {LLM}-Generated Text Via Reinforcement Learning},
author={Xiaojun Xu and Yuanshun Yao and Yang Liu},
booktitle={The 1st Workshop on GenAI Watermarking},
year={2025},
url={https://openreview.net/forum?id=dTlCMiEOdk}
}

@misc{prithivida2021parrot,
  author       = {Prithiviraj Damodaran},
  title        = {Parrot: Paraphrase generation for NLU.},
  year         = 2021,
  version      = {v1.0}
}

@misc{qwen38,
    title = {{Qwen3.8-Max}: A New Bar for Coding and Cowork},
    url = {https://qwen.ai/blog?id=qwen3.8},
    author = {{Qwen Team}},
    month = {August},
    year = {2026}
}

@misc{grattafiori2024llama3herdmodels,
      title={The Llama 3 Herd of Models}, 
      author={Aaron Grattafiori and Abhimanyu Dubey and Abhinav Jauhri and Abhinav Pandey and Abhishek Kadian and Ahmad Al-Dahle and Aiesha Letman and Akhil Mathur and Alan Schelten et al.},
      year={2024},
      eprint={2407.21783},
      archivePrefix={arXiv},
      primaryClass={cs.AI},
      url={https://arxiv.org/abs/2407.21783}, 
}

@misc{zhang2022opt,
      title={OPT: Open Pre-trained Transformer Language Models}, 
      author={Susan Zhang and Stephen Roller and Naman Goyal and Mikel Artetxe and Moya Chen and Shuohui Chen and Christopher Dewan and Mona Diab and Xian Li and Xi Victoria Lin and Todor Mihaylov and Myle Ott and Sam Shleifer and Kurt Shuster and Daniel Simig and Punit Singh Koura and Anjali Sridhar and Tianlu Wang and Luke Zettlemoyer},
      year={2022},
      eprint={2205.01068},
      archivePrefix={arXiv},
      primaryClass={cs.CL}
}

@misc{aaronson2023watermarking,
  author       = {Scott Aaronson},
  title        = {Watermarking of Large Language Models},
  year         = {2023},
  howpublished = {Talk at the Simons Institute for the Theory of Computing},
  url          = {https://simons.berkeley.edu/talks/scott-aaronson-ut-austin-openai-2023-08-17},
  note         = {Slides available at \url{https://www.scottaaronson.com/talks/watermark.ppt}}
}

@inproceedings{chang-etal-2024-postmark,
    title = "{P}ost{M}ark: A Robust Blackbox Watermark for Large Language Models",
    author = "Chang, Yapei  and
      Krishna, Kalpesh  and
      Houmansadr, Amir  and
      Wieting, John Frederick  and
      Iyyer, Mohit",
    editor = "Al-Onaizan, Yaser  and
      Bansal, Mohit  and
      Chen, Yun-Nung",
    booktitle = "Proceedings of the 2024 Conference on Empirical Methods in Natural Language Processing",
    month = nov,
    year = "2024",
    address = "Miami, Florida, USA",
    publisher = "Association for Computational Linguistics",
    url = "https://aclanthology.org/2024.emnlp-main.506/",
    doi = "10.18653/v1/2024.emnlp-main.506",
    pages = "8969--8987"
}

@inproceedings{
krishna2023paraphrasing,
title={Paraphrasing evades detectors of {AI}-generated text, but retrieval is an effective defense},
author={Kalpesh Krishna and Yixiao Song and Marzena Karpinska and John Frederick Wieting and Mohit Iyyer},
booktitle={Thirty-seventh Conference on Neural Information Processing Systems},
year={2023},
url={https://openreview.net/forum?id=WbFhFvjjKj}
}

@misc{pham2025tensorsketchfastscalable,
      title={Tensor Sketch: Fast and Scalable Polynomial Kernel Approximation}, 
      author={Ninh Pham and Rasmus Pagh},
      year={2025},
      eprint={2505.08146},
      archivePrefix={arXiv},
      primaryClass={cs.DS},
      url={https://arxiv.org/abs/2505.08146}, 
}

@misc{yang2025qwen3technicalreport,
      title={Qwen3 Technical Report}, 
      author={An Yang and Anfeng Li and Baosong Yang and Beichen Zhang and Binyuan Hui and Bo Zheng and Bowen Yu and Chang Gao and Chengen Huang and Chenxu Lv and Chujie Zheng and Dayiheng Liu and Fan Zhou and Fei Huang and Feng Hu and Hao Ge and Haoran Wei and Huan Lin and Jialong Tang and Jian Yang and Jianhong Tu and Jianwei Zhang and Jianxin Yang and Jiaxi Yang and Jing Zhou and Jingren Zhou and Junyang Lin and Kai Dang and Keqin Bao and Kexin Yang and Le Yu and Lianghao Deng and Mei Li and Mingfeng Xue and Mingze Li and Pei Zhang and Peng Wang and Qin Zhu and Rui Men and Ruize Gao and Shixuan Liu and Shuang Luo and Tianhao Li and Tianyi Tang and Wenbiao Yin and Xingzhang Ren and Xinyu Wang and Xinyu Zhang and Xuancheng Ren and Yang Fan and Yang Su and Yichang Zhang and Yinger Zhang and Yu Wan and Yuqiong Liu and Zekun Wang and Zeyu Cui and Zhenru Zhang and Zhipeng Zhou and Zihan Qiu},
      year={2025},
      eprint={2505.09388},
      archivePrefix={arXiv},
      primaryClass={cs.CL},
      url={https://arxiv.org/abs/2505.09388}, 
}

@inproceedings{NEURIPS2021_260c2432,
 author = {Pillutla, Krishna and Swayamdipta, Swabha and Zellers, Rowan and Thickstun, John and Welleck, Sean and Choi, Yejin and Harchaoui, Zaid},
 booktitle = {Advances in Neural Information Processing Systems},
 editor = {M. Ranzato and A. Beygelzimer and Y. Dauphin and P.S. Liang and J. Wortman Vaughan},
 pages = {4816--4828},
 publisher = {Curran Associates, Inc.},
 title = {MAUVE: Measuring the Gap Between Neural Text and Human Text using Divergence Frontiers},
 url = {https://proceedings.neurips.cc/paper_files/paper/2021/file/260c2432a0eecc28ce03c10dadc078a4-Paper.pdf},
 volume = {34},
 year = {2021}
}

@InProceedings{zhang2020pegasus,
  title = 	 {{PEGASUS}: Pre-training with Extracted Gap-sentences for Abstractive Summarization},
  author =       {Zhang, Jingqing and Zhao, Yao and Saleh, Mohammad and Liu, Peter},
  booktitle = 	 {Proceedings of the 37th International Conference on Machine Learning},
  pages = 	 {11328--11339},
  year = 	 {2020},
  editor = 	 {III, Hal Daumé and Singh, Aarti},
  volume = 	 {119},
  series = 	 {Proceedings of Machine Learning Research},
  month = 	 {13--18 Jul},
  publisher =    {PMLR},
  url = 	 {https://proceedings.mlr.press/v119/zhang20ae.html}
}

@inproceedings{hanley2024machinemade,
  title={Machine-Made Media: Monitoring the Mobilization of Machine-Generated Articles on Misinformation and Mainstream News Websites},
  author={Hanley, Hans W. A. and Durumeric, Zakir},
  booktitle={Proceedings of the International AAAI Conference on Web and Social Media},
  volume={18},
  pages={542--556},
  year={2024},
  doi={10.1609/icwsm.v18i1.31333},
  url={https://ojs.aaai.org/index.php/ICWSM/article/view/31333}
}

@misc{dolezal2026impact,
  title={The Impact of AI-Generated Text on the Internet},
  author={Dolezal, Jonas and Alam, Sawood and Graham, Mark and Bohacek, Maty},
  year={2026},
  eprint={2604.26965},
  archivePrefix={arXiv},
  primaryClass={cs.CY},
  url={https://arxiv.org/abs/2604.26965}
}

@misc{anthropic2026claudewatermark,
  author       = {{Anthropic}},
  title        = {How {Claude's} text watermark works},
  howpublished = {\url{https://www.anthropic.com/news/claude-text-watermark}},
  year         = {2026},
  month        = aug,
  day          = {14},
  note         = {Accessed: 2026-09-29}
}

@misc{ec2026codeofpractice,
  author       = {{European Commission}},
  title        = {Code of {Practice} on {Transparency} of {AI-generated} {Content}},
  howpublished = {\url{https://digital-strategy.ec.europa.eu/en/policies/code-practice-ai-generated-content}},
  year         = {2026},
  month        = jul,
  day          = {31},
  note         = {Accessed: 2026-09-29}
}

@inproceedings{ICLR2025_ea94957d,
 author = {Si, Chenglei and Yang, Diyi and Hashimoto, Tatsunori},
 booktitle = {International Conference on Learning Representations},
 editor = {Y. Yue and A. Garg and N. Peng and F. Sha and R. Yu},
 pages = {94003--94092},
 title = {Can LLMs Generate Novel Research Ideas? A Large-Scale Human Study with 100+ NLP Researchers},
 url = {https://proceedings.iclr.cc/paper_files/paper/2025/file/ea94957d81b1c1caf87ef5319fa6b467-Paper-Conference.pdf},
 volume = {2025},
 year = {2025}
}

@article{10.1145/3771937,
author = {Kashif, Syed Mohammad and Liang, Peng and Tahir, Amjed},
title = {On Developers’ Self-Declaration of AI-Generated Code: An Analysis of Practices},
year = {2026},
issue_date = {July 2026},
publisher = {Association for Computing Machinery},
address = {New York, NY, USA},
volume = {35},
number = {7},
issn = {1049-331X},
url = {https://doi.org/10.1145/3771937},
doi = {10.1145/3771937},
journal = {ACM Trans. Softw. Eng. Methodol.},
month = jul,
articleno = {200},
numpages = {37}
}

@article{Liang2025,
  author   = {Liang, Weixin and Zhang, Yaohui and Wu, Zhengxuan and Lepp, Haley and Ji, Wenlong and Zhao, Xuandong and Cao, Hancheng and Liu, Sheng and He, Siyu and Huang, Zhi and Yang, Diyi and Potts, Christopher and Manning, Christopher D. and Zou, James},
  title    = {Quantifying large language model usage in scientific papers},
  journal  = {Nature Human Behaviour},
  year     = {2025},
  month    = dec,
  volume   = {9},
  number   = {12},
  pages    = {2599--2609},
  doi      = {10.1038/s41562-025-02273-8},
  url      = {https://doi.org/10.1038/s41562-025-02273-8},
  issn     = {2397-3374}
}

@article{Stutz2026,
  author  = {Stutz, David and Cowen-Rivers, Alexander I. and Ortiz-Jimenez, Guillermo
             and Ratcliff, Jeremy and Zambaldi, Vinicius and Willmore, Lindsay
             and Abramson, Josh and Patani, Harshnira and Kouridi, Christina
             and Stimberg, Florian and Vecerik, Mel and Chu, Alex
             and Singh, Sukhdeep and Dathathri, Sumanth and Papa, Eliseo
             and De Bortoli, Valentin and Doucet, Arnaud and Hassabis, Demis
             and Wang, Jue and Gowal, Sven and Kohli, Pushmeet},
  title   = {Function-preserving watermarking of AI-generated proteins},
  journal = {Nature},
  year    = {2026},
  month   = sep,
  doi     = {10.1038/s41586-026-10965-y},
  url     = {https://doi.org/10.1038/s41586-026-10965-y},
  issn    = {1476-4687}
}

@article{
doi:10.1126/science.adz9311,
author = {Simone Daniotti  and Johannes Wachs  and Xiangnan Feng  and Frank Neffke },
title = {Who is using AI to code? Global diffusion and impact of generative AI},
journal = {Science},
volume = {391},
number = {6787},
pages = {831-835},
year = {2026},
doi = {10.1126/science.adz9311},
URL = {https://www.science.org/doi/abs/10.1126/science.adz9311},
eprint = {https://www.science.org/doi/pdf/10.1126/science.adz9311}}
\bibliographystyle{iclr2027_conference}

\appendix
\section{Propositions, Theorems and Proofs}
\label{app:prop-thm-proofs}

Throughout this section, \(K\) denotes the secret key and
\(\omega_K\in\mathbb{R}^d\) denotes the corresponding keyed direction. All structural results below instead hold deterministically for
any fixed \(\omega_K\).

\begin{theorem}[Gaussian field marginal]
\label{thm:gaussian-field-marginal}
Let \(x\in\mathbb{R}^d\) be a fixed unit vector and suppose that
\[
\omega_K\sim\mathcal{N}(0,I_d).
\]
Define the keyed semantic field
\[
f_K(x)=\omega_K^{\mathsf T}x.
\]
Then
\[
f_K(x)\sim\mathcal{N}(0,1).
\]
\end{theorem}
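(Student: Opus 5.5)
The plan is to show directly that the linear functional $f_K(x)=\omega_K^{\mathsf T}x=\sum_{i=1}^d x_i\,\omega_{K,i}$ is a linear combination of independent standard normal coordinates, and then to identify its mean and variance. Since $\omega_K\sim\mathcal{N}(0,I_d)$, its coordinates $\omega_{K,1},\ldots,\omega_{K,d}$ are i.i.d.\ $\mathcal{N}(0,1)$. The vector $x$ is fixed and independent of $K$, so the $x_i$ are deterministic coefficients.

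The first step is Gaussianity. We would use the characteristic function. For $s\in\mathbb{R}$,
\[
\mathbb{E}\bigl[e^{\mathrm{i}s f_K(x)}\bigr]=\prod_{i=1}^d \mathbb{E}\bigl[e^{\mathrm{i}s x_i\omega_{K,i}}\bigr]=\prod_{i=1}^d e^{-s^2x_i^2/2}=e^{-s^2\lVert x\rVert_2^2/2}.
\]
The factorization uses independence of the coordinates, and each factor is the standard normal characteristic function evaluated at $s x_i$. Equivalently, one could invoke the closure of multivariate Gaussians under linear maps, $a^{\mathsf T}\omega\sim\mathcal{N}(0,a^{\mathsf T}I_d a)$ for $\omega\sim\mathcal{N}(0,I_d)$.

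The second step is normalization. Because $x$ is a unit vector, $\lVert x\rVert_2^2=1$, so the characteristic function equals $e^{-s^2/2}$. This is the characteristic function of $\mathcal{N}(0,1)$, and uniqueness of characteristic functions gives $f_K(x)\sim\mathcal{N}(0,1)$. As a sanity check, linearity gives $\mathbb{E}[f_K(x)]=x^{\mathsf T}\mathbb{E}[\omega_K]=0$ and $\operatorname{Var}(f_K(x))=x^{\mathsf T}I_d x=1$.

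No step is a genuine obstacle. The only point needing care is modeling rather than computation: the argument requires $x$ to be fixed, or more generally independent of $\omega_K$. If $x$ were chosen depending on the key, as happens for generated text, the marginal would no longer be standard normal. We would state this explicitly, since the ideal-key assumption $\operatorname{GaussianPRG}(K)\sim\mathcal{N}(0,I_d)$ is precisely what the later conditional null result relies on.
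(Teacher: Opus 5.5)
Your proof is correct and takes essentially the same route as the paper, which asserts Gaussianity from closure of Gaussians under linear combinations and then computes $\mathbb{E}_K[f_K(x)]=0$ and $\operatorname{Var}_K(f_K(x))=x^{\mathsf T}I_dx=\lVert x\rVert_2^2=1$. Your characteristic-function computation proves the Gaussianity step (and yields the parameters) instead of citing it, and your remark that $x$ must be independent of $\omega_K$ is a sensible addition.
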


\begin{proof}
Since \(f_K(x)\) is a linear combination of independent Gaussian
variables, it is Gaussian. Its mean is
\[
\begin{aligned}
\mathbb{E}_K[f_K(x)]
&=
\mathbb{E}_K[\omega_K^{\mathsf T}x] \\
&=
\mathbb{E}_K[\omega_K]^{\mathsf T}x \\
&=0.
\end{aligned}
\]

Its variance is
\[
\begin{aligned}
\operatorname{Var}_K(f_K(x))
&=
\mathbb{E}_K[f_K(x)^2]
-
\left(\mathbb{E}_K[f_K(x)]\right)^2 \\
&=
\mathbb{E}_K\!\left[
\left(\omega_K^{\mathsf T}x\right)^2
\right] \\
&=
x^{\mathsf T}
\mathbb{E}_K[\omega_K\omega_K^{\mathsf T}]
x \\
&=
x^{\mathsf T}I_dx \\
&=
\lVert x\rVert_2^2 \\
&=1.
\end{aligned}
\]

Therefore,
\[
f_K(x)\sim\mathcal{N}(0,1).
\]
\end{proof}

\begin{theorem}[Covariance of keyed semantic-field scores]
\label{thm:covariance-of-keyed-semantic-field-scores}
Let \(x,y\in\mathbb{R}^d\) be fixed embeddings and suppose that
\[
\omega_K\sim\mathcal{N}(0,I_d).
\]
For \(z\in\mathbb{R}^d\), define
\[
f_K(z)=\omega_K^{\mathsf T}z.
\]
Then
\[
\operatorname{Cov}_K\!\left(f_K(x),f_K(y)\right)
=
x^{\mathsf T}y.
\]
\end{theorem}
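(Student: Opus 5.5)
The plan is a direct second-moment computation, mirroring the proof of \Cref{thm:gaussian-field-marginal}. First, both $f_K(x)=\omega_K^{\mathsf T}x$ and $f_K(y)=\omega_K^{\mathsf T}y$ have mean zero. By linearity of expectation, $\mathbb{E}_K[\omega_K^{\mathsf T}z]=\mathbb{E}_K[\omega_K]^{\mathsf T}z=0$ for any fixed $z\in\mathbb{R}^d$. Note that this step does not need $x,y$ to be unit vectors, since the statement only assumes fixed embeddings. Consequently the covariance reduces to the cross moment:
\[
\operatorname{Cov}_K\!\left(f_K(x),f_K(y)\right)=\mathbb{E}_K\!\left[f_K(x)f_K(y)\right].
\]

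Next I will rewrite the product of the two scalars as a quadratic form. Since $\omega_K^{\mathsf T}x$ is a scalar, it equals its transpose $x^{\mathsf T}\omega_K$. Hence
\[
f_K(x)f_K(y)=x^{\mathsf T}\omega_K\omega_K^{\mathsf T}y .
\]
Because $x$ and $y$ are deterministic, the expectation passes inside, giving $x^{\mathsf T}\,\mathbb{E}_K[\omega_K\omega_K^{\mathsf T}]\,y$.

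The last step is to identify the second-moment matrix. Since $\omega_K\sim\mathcal{N}(0,I_d)$ has zero mean, $\mathbb{E}_K[\omega_K\omega_K^{\mathsf T}]$ is its covariance matrix, which is $I_d$. Entrywise, this says $\mathbb{E}[\omega_i\omega_j]=\delta_{ij}$, by independence and unit variance of the coordinates. Substituting gives $x^{\mathsf T}I_d y=x^{\mathsf T}y$, as claimed.

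None of these steps is a real obstacle. The only point needing care is the scalar-transpose identity together with interchanging expectation with the fixed linear maps. Both are elementary and require only finite second moments, which the Gaussian has. As a sanity check, setting $x=y$ with $\lVert x\rVert_2=1$ recovers the unit variance of \Cref{thm:gaussian-field-marginal}.
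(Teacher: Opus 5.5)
Your proposal is correct and matches the paper's proof step for step. The zero means reduce the covariance to the cross moment, which you write as $x^{\mathsf T}\omega_K\omega_K^{\mathsf T}y$ and evaluate using $\mathbb{E}_K[\omega_K\omega_K^{\mathsf T}]=I_d$.
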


\begin{proof}
Because \(x\) and \(y\) are fixed and \(\mathbb{E}_K[\omega_K]=0\), both field scores have zero mean:
\[
\mathbb{E}_K[f_K(x)]=0,
\qquad
\mathbb{E}_K[f_K(y)]=0.
\]

Therefore,
\[
\begin{aligned}
\operatorname{Cov}_K\!\left(f_K(x),f_K(y)\right)
&=
\mathbb{E}_K\!\left[
f_K(x)f_K(y)
\right] \\
&=
\mathbb{E}_K\!\left[
\left(\omega_K^{\mathsf T}x\right)
\left(\omega_K^{\mathsf T}y\right)
\right] \\
&=
\mathbb{E}_K\!\left[
x^{\mathsf T}\omega_K\omega_K^{\mathsf T}y
\right] \\
&=
x^{\mathsf T}
\mathbb{E}_K[\omega_K\omega_K^{\mathsf T}]
y \\
&=
x^{\mathsf T}I_dy \\
&=
x^{\mathsf T}y.
\end{aligned}
\]
\end{proof}

\begin{theorem}[Conditional Gaussian null]
\label{thm:conditional-gaussian-null}
Let \(D\) be a null document with normalized sentence embeddings
\(x_1,\ldots,x_T\), and let
\[
A_T=\sum_{t=1}^{T}x_t.
\]
Assume that \(A_T\neq 0\), that \(D\) is independent of the randomly
drawn tested key \(K\), and that, 
\[
\omega_K\sim\mathcal{N}(0,I_d).
\]
Define
\[
u_T=\frac{A_T}{\lVert A_T\rVert_2}
\]
and
\[
Z_K=\omega_K^{\mathsf T}u_T.
\]
Then, under the null hypothesis,
\[
Z_K\mid D\sim\mathcal{N}(0,1).
\]
\end{theorem}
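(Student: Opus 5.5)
The plan is to condition on $D$ and use independence to reduce everything to \Cref{thm:gaussian-field-marginal}. Because $D$ is independent of the tested key $K$, the conditional law of $\omega_K$ given $D$ equals its unconditional law, $\mathcal{N}(0,I_d)$. I will state this explicitly, interpreting the ``ideal key model'' as $\omega_K$ being exactly Gaussian and independent of everything that determines $D$.

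Once we condition on $D$, the embeddings $x_1,\ldots,x_T$ are fixed. They are deterministic functions of $D$ through the encoder $\mathcal{E}_\theta$, which does not depend on $K$. Hence $A_T$ is fixed, and by the assumption $A_T\neq 0$ the vector $u_T=A_T/\lVert A_T\rVert_2$ is a well-defined fixed unit vector.

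Applying \Cref{thm:gaussian-field-marginal} with $x=u_T$, under the conditional law of $\omega_K$ given $D$, gives $Z_K=\omega_K^{\mathsf T}u_T\sim\mathcal{N}(0,1)$ conditionally on $D$. To be more concrete, I would compute the conditional characteristic function directly:
\[
\mathbb{E}\bigl[e^{i s Z_K}\mid D\bigr]=\mathbb{E}\bigl[e^{i s\,\omega_K^{\mathsf T}u}\bigr]\Big|_{u=u_T}=e^{-s^2\lVert u_T\rVert_2^2/2}=e^{-s^2/2}.
\]
The first equality is the substitution (freezing) lemma for independent random variables, with $u_T$ being $D$-measurable.

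The main obstacle is justifying this freezing step, since it is where independence is genuinely used. If $D$ depended on $K$, as it does for watermarked text, then $u_T$ would be correlated with $\omega_K$ and the argument would fail. I would therefore invoke the standard lemma: if $X$ and $Y$ are independent and $g$ is measurable, then $\mathbb{E}[g(X,Y)\mid Y]=h(Y)$ with $h(y)=\mathbb{E}[g(X,y)]$. Here $X=\omega_K$, $Y=D$, and $g(\omega,D)=e^{is\,\omega^{\mathsf T}u_T(D)}$.

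A final remark will note that the conditional law does not depend on $D$. Consequently $Z_K\sim\mathcal{N}(0,1)$ unconditionally as well, and the $p$-values $p_{\mathrm{SF}}$ and $p_{\mathrm{PL}}$ are exact under $H_0$.
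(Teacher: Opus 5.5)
Your proposal is correct and follows the same route as the paper. You condition on $D$, use the independence of $D$ and $K$ so that $\omega_K$ stays $\mathcal{N}(0,I_d)$ while $u_T$ becomes a fixed unit vector, and read off a standard normal projection. The paper instead asserts Gaussianity and computes the conditional mean and variance directly, while you reduce to \Cref{thm:gaussian-field-marginal} via the conditional characteristic function and an explicit freezing lemma; this difference is cosmetic, though your version makes the use of independence slightly more rigorous.
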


\begin{proof}
Conditional on the observed document \(D\), the aggregate \(A_T\) and the normalized vector \(u_T\) are fixed. Since \(D\) is independent of the tested key, conditioning on \(D\) does not change the distribution of \(\omega_K\).

The statistic is therefore a fixed linear combination of the Gaussian
vector \(\omega_K\):
\[
Z_K=\omega_K^{\mathsf T}u_T.
\]
Hence \(Z_K\mid D\) is Gaussian.

Its conditional mean is
\[
\begin{aligned}
\mathbb{E}_K[Z_K\mid D]
&=
\mathbb{E}_K[\omega_K^{\mathsf T}u_T\mid D] \\
&=
\mathbb{E}_K[\omega_K\mid D]^{\mathsf T}u_T \\
&=0.
\end{aligned}
\]

Its conditional variance is
\[
\begin{aligned}
\operatorname{Var}_K(Z_K\mid D)
&=
u_T^{\mathsf T}
\operatorname{Cov}_K(\omega_K\mid D)
u_T \\
&=
u_T^{\mathsf T}I_du_T \\
&=
\lVert u_T\rVert_2^2 \\
&=1.
\end{aligned}
\]

Therefore,
\[
Z_K\mid D\sim\mathcal{N}(0,1).
\]
\end{proof}

\begin{proposition}[Continuity of the keyed field]
\label{prop:field-continuity}
Let $f_K(x)=\omega_K^{\mathsf T}x$. For any
$u,v\in\mathbb{R}^d$,
\begin{equation}
|f_K(u)-f_K(v)|
\leq
\lVert\omega_K\rVert_2\,\lVert u-v\rVert_2.
\end{equation}
In particular, if $u$ and $v$ are unit vectors, then
\begin{equation}
|f_K(u)-f_K(v)|
\leq
\lVert\omega_K\rVert_2
\sqrt{2(1-u^{\mathsf T}v)}.
\end{equation}
Thus, $f_K$ is continuous for every fixed key.
\end{proposition}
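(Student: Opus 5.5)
The plan is to use linearity of $f_K$ together with Cauchy--Schwarz, and then expand a squared norm for the unit-vector case. For arbitrary $u,v\in\mathbb{R}^d$, linearity of the inner product gives
\[
f_K(u)-f_K(v)=\omega_K^{\mathsf T}u-\omega_K^{\mathsf T}v=\omega_K^{\mathsf T}(u-v).
\]
Applying the Cauchy--Schwarz inequality to the right-hand side then yields $|f_K(u)-f_K(v)|\leq\lVert\omega_K\rVert_2\,\lVert u-v\rVert_2$. This is the first claimed bound.

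For the unit-vector specialization, I will expand the squared distance:
\[
\lVert u-v\rVert_2^2=\lVert u\rVert_2^2-2u^{\mathsf T}v+\lVert v\rVert_2^2=2(1-u^{\mathsf T}v).
\]
Since $\lVert u\rVert_2=\lVert v\rVert_2=1$, we have $u^{\mathsf T}v\leq 1$, so the quantity $2(1-u^{\mathsf T}v)$ is nonnegative. Taking square roots therefore gives $\lVert u-v\rVert_2=\sqrt{2(1-u^{\mathsf T}v)}$. Substituting this into the first bound gives the second inequality.

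Continuity follows directly. For a fixed key, $\lVert\omega_K\rVert_2$ is a finite constant $L$, so the first bound shows that $f_K$ is $L$-Lipschitz. In particular, it is uniformly continuous on $\mathbb{R}^d$ and on $S^{d-1}$: given $\varepsilon>0$, any $\delta=\varepsilon/L$ works, with an arbitrary $\delta$ when $L=0$.

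There is no substantive obstacle here. The only point that needs care is noting that $u^{\mathsf T}v\leq 1$ for unit vectors, so the square root in the second bound is well defined. No randomness of $\omega_K$ is used, which is consistent with the appendix's remark that the structural results hold deterministically for any fixed $\omega_K$.
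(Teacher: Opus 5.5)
Your proposal is correct and follows the paper's proof: linearity plus Cauchy--Schwarz for the first bound, the identity $\lVert u-v\rVert_2^2=2(1-u^{\mathsf T}v)$ for unit vectors, and an $\varepsilon$--$\delta$ argument for continuity. The only cosmetic difference is that the paper takes $\delta=\varepsilon/(\lVert\omega_K\rVert_2+1)$ to sidestep the degenerate case $\omega_K=0$, which you handle separately instead.
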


\begin{proof}
By linearity and the Cauchy--Schwarz inequality,
\begin{equation}
|f_K(u)-f_K(v)|
=
|\omega_K^{\mathsf T}(u-v)|
\leq
\lVert\omega_K\rVert_2\,\lVert u-v\rVert_2.
\end{equation}
For unit vectors,
$\lVert u-v\rVert_2^2=2(1-u^{\mathsf T}v)$, giving the second bound.

To verify continuity directly, given any $\varepsilon>0$, choose $\delta=\frac{\varepsilon}{(\lVert\omega_K\rVert_2+1)}$. Whenever $\lVert u-v\rVert_2<\delta$, the first bound gives $|f_K(u)-f_K(v)|<\varepsilon$.
\end{proof}

\begin{proposition}[Exact invariance to sentence reordering]
\label{prop:reordering-invariance}
Let a document contain \(T\) intact sentences with fixed embeddings \(x_1,\ldots,x_T\), and let \(\pi\) be any permutation of \(\{1,\ldots,T\}\). Define
\[
A_T=\sum_{t=1}^{T}x_t,
\qquad
A_T^\pi=\sum_{t=1}^{T}x_{\pi(t)}.
\]
Assume \(A_T\neq 0\). For a fixed key \(K\), define, for any nonzero
aggregate \(A\),
\[
Z_K(A)=\omega_K^{\mathsf T}\frac{A}{\lVert A\rVert_2},
\qquad
H_K(A)=|Z_K(A)|.
\]
Then
\[
Z_K(A_T^\pi)=Z_K(A_T),
\qquad
H_K(A_T^\pi)=H_K(A_T).
\]
Consequently, detection by either \semfield, using \(Z_K\), or \semfieldpl, using \(H_K\), is exactly invariant to reordering intact sentences.
\end{proposition}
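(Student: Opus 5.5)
The plan is to show first that the aggregate itself is unchanged by the permutation, $A_T^\pi=A_T$. After that, both claimed identities are just a function applied to equal arguments. The key observation is that $\pi$ is a bijection of the finite index set $\{1,\ldots,T\}$, and vector addition in $\mathbb{R}^d$ is commutative and associative. Reindexing the finite sum with $s=\pi(t)$ therefore gives
\[
A_T^\pi=\sum_{t=1}^{T}x_{\pi(t)}=\sum_{s\in\pi(\{1,\ldots,T\})}x_s=\sum_{s=1}^{T}x_s=A_T.
\]
To be fully explicit, one can argue componentwise: each coordinate of $A_T^\pi$ is a finite sum of the same real numbers in a different order. Alternatively, one can induct on transpositions, since every permutation is a product of them and swapping two adjacent summands leaves the sum unchanged.

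Next, I note that $A_T^\pi=A_T\neq 0$, so $Z_K(A_T^\pi)$ is well defined; this is the only place the hypothesis $A_T\neq 0$ enters. The normalized vectors then coincide, $A_T^\pi/\lVert A_T^\pi\rVert_2=A_T/\lVert A_T\rVert_2=u_T$. Since $\omega_K$ is held fixed, this gives $Z_K(A_T^\pi)=\omega_K^{\mathsf T}u_T=Z_K(A_T)$. Taking absolute values yields $H_K(A_T^\pi)=H_K(A_T)$.

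For the ``consequently'' clause, I will observe what each detector actually computes. The \semfield detector rejects exactly when $1-\Phi(Z_K)$ falls below the threshold. The \semfieldpl detector rejects exactly when $2[1-\Phi(H_K)]$ falls below the threshold. Both $p$-values are deterministic functions of $Z_K$ and $H_K$ respectively, with the key and the threshold fixed, so equal scores give equal $p$-values and identical decisions.

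There is no genuine obstacle here. The only point needing care is the modeling assumption in the statement that the sentences are ``intact''. The embeddings $x_t$ must be those of the sentences themselves, unaffected by their position or by re-segmentation, so that the multiset $\{x_1,\ldots,x_T\}$ is preserved. I will state explicitly that this hypothesis is what licenses writing the reordered aggregate as $\sum_t x_{\pi(t)}$, and that nothing beyond this is claimed.
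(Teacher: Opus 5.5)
Your proposal is correct and follows essentially the same route as the paper: commutativity of the finite sum gives $A_T^\pi=A_T$, so the normalized aggregates coincide, hence $Z_K(A_T^\pi)=Z_K(A_T)$, and taking absolute values gives equality of $H_K$. Your added remark that each detector's decision is a deterministic function of its score makes the ``consequently'' clause explicit, which the paper leaves implicit.
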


\begin{proof}
Because \(\pi\) is a permutation, it only changes the order of the terms in the finite sum. Vector addition is commutative, so
\[
A_T^\pi
=\sum_{t=1}^{T}x_{\pi(t)}
=\sum_{t=1}^{T}x_t
=A_T.
\]
Since \(A_T\neq 0\), the normalized aggregates are also equal. Hence
\[
Z_K(A_T^\pi)
=\omega_K^{\mathsf T}
\frac{A_T^\pi}{\lVert A_T^\pi\rVert_2}
=\omega_K^{\mathsf T}
\frac{A_T}{\lVert A_T\rVert_2}
=Z_K(A_T).
\]
Taking absolute values gives \(H_K(A_T^\pi)=H_K(A_T)\).
\end{proof}

\begin{proposition}[Deterministic stability under aggregate perturbations]
\label{prop:aggregate-stability}
Let \(A\neq 0\) be the semantic aggregate of a document, and let
\[
A'=A+\Delta\neq 0
\]
be the aggregate after an insertion or deletion attack. Define
\[
u=\frac{A}{\lVert A\rVert_2},
\qquad
u'=\frac{A'}{\lVert A'\rVert_2}.
\]
For a fixed key \(K\), define the signed scores and their absolute values:
\[
Z_K(A)=\omega_K^{\mathsf T}u,
\qquad
Z_K(A')=\omega_K^{\mathsf T}u',
\qquad
H_K(A)=|Z_K(A)|,
\qquad
H_K(A')=|Z_K(A')|.
\]

Then
\[
\lVert u'-u\rVert_2
\leq
\frac{2\lVert\Delta\rVert_2}
{\min\{\lVert A\rVert_2,\lVert A'\rVert_2\}}.
\]
Define
\[
\varepsilon_{\mathrm{edit}}
=
\frac{
2\lVert \omega_K\rVert_2\lVert\Delta\rVert_2
}{
\min\{\lVert A\rVert_2,\lVert A'\rVert_2\}
}.
\]
Then both variant scores obey
\[
|Z_K(A')-Z_K(A)|\leq\varepsilon_{\mathrm{edit}},
\qquad
|H_K(A')-H_K(A)|\leq\varepsilon_{\mathrm{edit}}.
\]
For either variant, let \(R_K\) be its score: \(R_K=Z_K\) for  \semfield and \(R_K=H_K\) for \semfieldpl. Let \(\tau\) be that variant's
detection threshold, with detection when \(R_K\geq\tau\). If \(R_K(A)-\tau>\varepsilon_{\mathrm{edit}}\), the document remains detected after the attack. Conversely, if \(\tau-R_K(A)>\varepsilon_{\mathrm{edit}}\), the attack cannot cause a detection.
\end{proposition}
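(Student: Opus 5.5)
The argument has three parts: a deterministic bound on how far the normalized aggregate moves, a transfer of that bound to the scores via \Cref{prop:field-continuity}, and a threshold comparison for the detection claims.

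\textbf{Step 1: bound on the unit vectors.} Write $a=\lVert A\rVert_2$ and $a'=\lVert A'\rVert_2$. Assume without loss of generality that $a\le a'$; the other case follows by swapping the roles of $A$ and $A'$, since $A=A'-\Delta$. I would insert the intermediate vector $A'/a$ and use the triangle inequality:
\[
\lVert u'-u\rVert_2\le\left\lVert \frac{A'}{a'}-\frac{A'}{a}\right\rVert_2+\left\lVert\frac{A'}{a}-\frac{A}{a}\right\rVert_2 .
\]
The first term equals $a'\,\bigl|1/a'-1/a\bigr| = |a-a'|/a$. By the reverse triangle inequality, $|a-a'|\le\lVert\Delta\rVert_2$, so this term is at most $\lVert\Delta\rVert_2/a$. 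The second term is exactly $\lVert\Delta\rVert_2/a$. Adding them gives $2\lVert\Delta\rVert_2/a = 2\lVert\Delta\rVert_2/\min\{a,a'\}$.

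The only delicate point in the whole argument is choosing which norm to divide by in the intermediate vector. It must be the smaller norm, so that the final denominator is $\min\{a,a'\}$. Getting this choice right is the main obstacle, mild as it is.

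\textbf{Step 2: bound on the scores.} For the signed score, linearity and Cauchy--Schwarz (\Cref{prop:field-continuity}) give
\[
|Z_K(A')-Z_K(A)| = |\omega_K^{\mathsf T}(u'-u)| \le \lVert\omega_K\rVert_2\,\lVert u'-u\rVert_2 .
\]
Combined with Step 1, this is at most $\varepsilon_{\mathrm{edit}}$. For the absolute score, the reverse triangle inequality gives $\bigl||Z_K(A')|-|Z_K(A)|\bigr|\le|Z_K(A')-Z_K(A)|$, so $H_K$ obeys the same bound.

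\textbf{Step 3: detection consequences.} Let $R_K$ denote either score. By Step 2, $R_K(A')\ge R_K(A)-\varepsilon_{\mathrm{edit}}$. If $R_K(A)-\tau>\varepsilon_{\mathrm{edit}}$, then $R_K(A')>\tau$, so the document remains detected after the attack. Symmetrically, $R_K(A')\le R_K(A)+\varepsilon_{\mathrm{edit}}$. If $\tau-R_K(A)>\varepsilon_{\mathrm{edit}}$, then $R_K(A')<\tau$, so the attack cannot cause a detection.

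None of these steps uses randomness of the key, so the whole result holds deterministically for any fixed $\omega_K$, as stated.
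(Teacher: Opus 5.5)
Your proof is correct and is essentially the paper's argument: an intermediate vector with the triangle inequality, the reverse triangle inequality $|a-a'|\le\lVert\Delta\rVert_2$, Cauchy--Schwarz, the reverse triangle inequality for absolute values, and the two threshold comparisons. The one difference undercuts your remark about the ``delicate point'': the paper inserts $A/a'$ (writing $u'-u=(A'-A)/a'+A(1/a'-1/a)$) and gets $2\lVert\Delta\rVert_2/a'$ with no case split, so either intermediate vector works, and dividing by the \emph{larger} norm actually gives the stronger bound $2\lVert\Delta\rVert_2/\max\{a,a'\}$.
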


\begin{proof}
For brevity, write
\[
a=\lVert A\rVert_2,
\qquad
a'=\lVert A'\rVert_2,
\qquad
\delta=\lVert\Delta\rVert_2.
\]
Since \(A'=A+\Delta\),
\[
A'-A=\Delta.
\]
Using \(u=A/a\) and \(u'=A'/a'\), we obtain
\[
\begin{aligned}
u'-u
&=
\frac{A'}{a'}-\frac{A}{a} \\
&=
\frac{A'-A}{a'}
+
A\left(\frac{1}{a'}-\frac{1}{a}\right).
\end{aligned}
\]

Taking Euclidean norms and applying the triangle inequality gives
\[
\begin{aligned}
\lVert u'-u\rVert_2
&\leq
\frac{\lVert A'-A\rVert_2}{a'}
+
\lVert A\rVert_2
\left|\frac{1}{a'}-\frac{1}{a}\right| \\
&=
\frac{\delta}{a'}
+
\frac{|a-a'|}{a'}.
\end{aligned}
\]

By the reverse triangle inequality,
\[
|a-a'|
=
\big|\lVert A\rVert_2-\lVert A'\rVert_2\big|
\leq
\lVert A-A'\rVert_2
=
\delta.
\]
Therefore,
\[
\lVert u'-u\rVert_2
\leq
\frac{2\delta}{a'}
\leq
\frac{2\delta}
{\min\{a,a'\}},
\]
which proves the normalized-direction bound.

Next, the Cauchy--Schwarz inequality gives
\[
\begin{aligned}
|Z_K(A')-Z_K(A)|
&=
|\omega_K^{\mathsf T}(u'-u)| \\
&\leq
\lVert \omega_K\rVert_2
\lVert u'-u\rVert_2 \\
&\leq\varepsilon_{\mathrm{edit}}.
\end{aligned}
\]
The reverse triangle inequality for real numbers then gives
\[
|H_K(A')-H_K(A)|
=\bigl||Z_K(A')|-|Z_K(A)|\bigr|
\leq |Z_K(A')-Z_K(A)|
\leq\varepsilon_{\mathrm{edit}}.
\]
For either score \(R_K\), these bounds imply
\[
R_K(A)-\varepsilon_{\mathrm{edit}}
\leq R_K(A')
\leq R_K(A)+\varepsilon_{\mathrm{edit}}.
\]
If \(R_K(A)-\tau>\varepsilon_{\mathrm{edit}}\), the lower bound yields \(R_K(A')>\tau\), so detection is preserved. If \(\tau-R_K(A)>\varepsilon_{\mathrm{edit}}\), the upper bound yields \(R_K(A')<\tau\), so the attack cannot cause a detection.
\end{proof}
\section{Higher-order polynomial fields: expressivity and cost}
\label{app:higher-order-fields}

The linear field used in this work is the \textit{lowest-degree nonconstant polynomial field} over the normalized embedding sphere. A natural extension is to use a higher-order polynomial field, which can represent interactions between embedding coordinates and may distinguish closely spaced candidate embeddings more strongly.

Let $x\in\mathbb{R}^{d}$ be a normalized sentence embedding. For an integer degree $m\geq 1$, we can define a higher-order feature representation as
\begin{equation}
    \psi_m(x)=x^{\otimes m},
\end{equation}
where $x^{\otimes m}$ denotes the $m$-fold tensor product of $x$. In the ideal Gaussian-key model, let $W_K^{(m)}$ be a secret random tensor whose entries are independent standard Gaussian variables. The corresponding sentence-level \textit{field score} is defined by
\begin{equation}
    f_K^{(m)}(x)
    =
    \left\langle
        W_K^{(m)},
        x^{\otimes m}
    \right\rangle_F,
\end{equation}
where $\langle\cdot,\cdot\rangle_F$ denotes the Frobenius inner product. The cases $m=1$, $m=2$, and $m=3$ correspond to linear, quadratic, and cubic fields, respectively.

For a document containing normalized sentence embeddings $x_1,\ldots,x_T$, the transformed document aggregate is
\begin{equation}
    A_T^{(m)}
    =
    \sum_{t=1}^{T}\psi_m(x_t).
\end{equation}
The document-level statistic is obtained by normalizing this aggregate and projecting it onto the keyed direction in the transformed feature space:
\begin{equation}
    Z_{K}^{(m)}
    =
    \frac{
        \left\langle W_{K}^{(m)},A_T^{(m)}\right\rangle_F
    }{
        \left\lVert A_T^{(m)}\right\rVert_F
    }.
\end{equation}
Thus, changing the polynomial degree changes the geometry used during both candidate selection and detection, while preserving the document-level construction of \semfield.

Because $x$ is normalized,
\begin{equation}
    \left\lVert x^{\otimes m}\right\rVert_F
    =
    \lVert x\rVert_2^m
    =
    1.
\end{equation}
Consequently, for every fixed $x$, the marginal field response over the ideal random-key draw satisfies
\begin{equation}
    f_K^{(m)}(x)\sim\mathcal{N}(0,1).
\end{equation}
For two fixed normalized embeddings $x$ and $y$, the covariance over the same ideal random-key draw between their field responses is
\begin{equation}
\begin{aligned}
    \operatorname{Cov}_K\!\left(f_K^{(m)}(x),f_K^{(m)}(y)\right)
    &=
    \left\langle
        x^{\otimes m},
        y^{\otimes m}
    \right\rangle_F \\
    &=
    \left(x^{\mathsf T}y\right)^m.
\end{aligned}
\end{equation}
The degree-$m$ field, therefore, induces the homogeneous polynomial kernel
\begin{equation}
    \kappa_m(x,y)
    =
    \left(x^{\mathsf T}y\right)^m.
\end{equation}

\paragraph{Expressivity.}
Let $c=x^{\mathsf T}y$ be the cosine similarity between two nearby embeddings. When $c$ is close to one,
\begin{equation}
\begin{aligned}
    c^m
    &=
    \left(1-(1-c)\right)^m \\
    &\approx
    1-m(1-c).
\end{aligned}
\end{equation}
The expected squared difference between their field responses is
\begin{equation}
\begin{aligned}
    \mathbb{E}_K\!\left[
        \left(f_K^{(m)}(x)-f_K^{(m)}(y)\right)^2
    \right]
    &=
    \operatorname{Var}_K\!\left(f_K^{(m)}(x)\right)
    +
    \operatorname{Var}_K\!\left(f_K^{(m)}(y)\right) \\
    &\quad
    -2\operatorname{Cov}_K\!\left(f_K^{(m)}(x),f_K^{(m)}(y)\right) \\
    &=
    2\left(1-c^m\right) \\
    &\approx
    2m(1-c).
\end{aligned}
\end{equation}
Hence, increasing the degree can enlarge the differences between the field responses of closely related candidates. A higher-order field may therefore provide a larger candidate-selection margin when the original embeddings occupy a narrow region of the embedding space.

But this additional sensitivity is not specific to generation candidates. If rewriting moves an embedding from $x$ to $y$, the same factor $1-c^m$ controls the change in their field responses. Increasing $m$ can therefore magnify both useful candidate differences and unwanted semantic movement introduced by rewriting. Higher-order fields consequently create an expressivity--stability trade-off rather than providing an unconditional increase in watermark robustness.

\paragraph{Computational cost.}
The ideal representation $x^{\otimes m}$ contains $d^m$ entries. Explicitly constructing and storing it requires $\mathcal{O}\!\left(d^m\right)$ space, while evaluating a dense degree-$m$ field requires $\mathcal{O}\!\left(d^m\right)$ operations per candidate. Even when tensor symmetry is exploited, the number of distinct degree-$m$ monomials is $\binom{d+m-1}{m}$, which remains large for modern sentence-embedding dimensions.

Approximate polynomial feature maps, such as \textsc{TensorSketch} \citep{pham2025tensorsketchfastscalable} or random Maclaurin features, can replace the explicit tensor by a fixed $p$-dimensional representation. This reduces storage for each feature representation to $\mathcal{O}(p)$ and makes higher-order fields computationally feasible. However, the resulting inner products only approximate $(x^{\mathsf T}y)^m$, introducing an additional approximation error whose size depends on $p$. The same public feature map must also be fixed and used consistently during generation and detection.

Higher-order fields, therefore, offer greater local expressivity at the cost of increased sensitivity, computation, storage, and, when approximated, feature-map error. The linear field studied in this work avoids these costs and retains direct, linear dependence on the embedding representation.
\section{Calibrated Results}
\label{app:calib-res}

\begin{table*}[h]
\centering
\scriptsize
\setlength{\tabcolsep}{2.0pt}
\renewcommand{\arraystretch}{1.10}
\caption{Fixed-threshold detection results for \llama. The same cutoff is used for clean and all rewriting conditions. Best values in a column are bolded, and the second best are underlined.}
\label{tab:llama32-1b-appendix-null10k-current}
\resizebox{\textwidth}{!}{%
\begin{tabular}{lccccc|cc|c}
\toprule
Method & \clean & \pegasus & \parrot & \dipper & \llmrewrite & \shortstack{Mean\\\tpro} & \shortstack{Mean\\\tprf} & \shortstack{Corp. PPL $\downarrow$\\(\pplratio)} \\
\midrule
\rowcolor{gray!15}
\multicolumn{9}{@{}l}{\textsc{Token-based watermarks}} \\
\addlinespace[2pt]
\KGW & \textbf{100.0} / \underline{99.6} / \textbf{100.0} & 97.8 / 76.0 / 91.6 & 98.6 / 84.0 / 93.8 & 93.0 / 47.0 / 70.6 & 70.0 / 8.2 / 23.6 & 63.0 & 75.9 & 8.7 (0.897) \\
\synthid & \textbf{100.0} / \textbf{100.0} / \textbf{100.0} & 91.2 / 55.6 / 73.0 & 90.6 / 58.6 / 72.6 & 73.5 / 18.2 / 31.8 & 57.2 / 6.8 / 16.6 & 47.8 & 58.8 & 7.7 (0.793) \\
\EXP & 99.8 / 99.2 / \underline{99.4} & 89.5 / 56.0 / 69.0 & 89.2 / 58.0 / 71.4 & 80.0 / 28.2 / 45.4 & 56.2 / 4.4 / 12.6 & 49.2 & 59.6 & \underline{4.5} (\underline{0.471}) \\
\SIR & \underline{99.9} / 97.0 / 99.2 & 95.1 / 76.4 / 88.6 & 96.2 / 75.2 / 87.2 & 93.0 / 61.6 / 79.0 & 82.9 / 17.0 / 42.6 & 65.4 & 79.3 & 7.6 (0.783) \\
\addlinespace[3pt]
\rowcolor{blue!10}
\multicolumn{9}{@{}l}{\textsc{Semantic watermarks}} \\
\addlinespace[2pt]
\semstamp & \underline{99.9} / 99.2 / \textbf{100.0} & 96.6 / 57.8 / 89.6 & 97.5 / 67.6 / 92.6 & 87.0 / 25.4 / 61.6 & 66.7 / 4.4 / 28.2 & 50.9 & 74.4 & 5.7 (0.591) \\
\ksemstamp & 99.4 / 96.4 / 99.2 & 94.7 / 45.2 / 74.4 & 96.6 / 52.4 / 82.8 & 85.9 / 16.8 / 46.6 & 71.7 / 4.4 / 16.6 & 43.0 & 63.9 & 6.1 (0.633) \\
\simmark & 99.2 / 89.8 / 96.2 & 98.2 / 72.8 / 92.2 & 98.7 / 79.0 / 93.0 & 95.3 / 43.0 / 77.4 & 84.2 / 12.4 / 37.6 & 59.4 & 79.3 & 6.1 (0.635) \\
\pmark & 99.4 / 91.0 / 97.4 & 95.6 / 34.8 / 84.8 & 98.0 / 59.4 / 90.6 & 87.1 / 20.6 / 57.6 & 74.6 / 3.4 / 24.6 & 41.8 & 71.0 & \textbf{3.9} (\textbf{0.401}) \\
\alimark & 99.8 / 98.0 / \underline{99.4} & 97.9 / 78.6 / 90.2 & 98.3 / 82.8 / 93.0 & 92.5 / 54.2 / 71.6 & 89.2 / 33.4 / 56.8 & 69.4 & 82.2 & 11.3 (1.168) \\
\samark & 95.2 / 60.2 / 87.0 & 95.3 / 62.6 / 84.8 & 96.4 / 68.2 / 88.8 & 93.3 / 49.8 / 75.4 & 92.1 / 43.6 / 74.6 & 56.9 & 82.1 & 6.8 (0.708) \\
\swordstamp & 99.7 / 93.2 / 99.2 & 99.1 / 86.8 / 97.0 & 99.3 / 87.4 / \underline{98.0} & 96.4 / 72.2 / 88.0 & 84.4 / 22.8 / 53.0 & 72.5 & 87.0 & 6.7 (0.691) \\
\kswordstamp & 99.7 / 87.0 / 95.4 & 99.0 / 69.2 / 90.4 & 99.2 / 75.8 / 92.4 & 97.7 / 59.0 / 80.6 & 95.5 / 39.8 / 70.0 & 66.2 & 85.8 & 6.8 (0.707) \\
\addlinespace[3pt]
\rowcolor{green!12}
\multicolumn{9}{@{}l}{\textsc{Ours}} \\
\addlinespace[2pt]
\semfield & 99.8 / 96.8 / 99.0 & \underline{99.4} / \underline{91.8} / \underline{98.0} & \underline{99.5} / \underline{92.6} / 97.8 & \underline{98.9} / \textbf{88.2} / \textbf{96.2} & \underline{96.2} / \underline{66.2} / \underline{84.6} & \underline{87.1} & \underline{95.1} & 6.8 (0.703) \\
\semfieldpl & \textbf{100.0} / \underline{99.6} / \textbf{100.0} & \textbf{99.7} / \textbf{95.4} / \textbf{99.0} & \textbf{99.8} / \textbf{96.2} / \textbf{99.4} & \textbf{99.1} / \underline{86.8} / \underline{95.8} & \textbf{97.8} / \textbf{68.6} / \textbf{89.6} & \textbf{89.3} & \textbf{96.8} & 6.9 (0.715) \\
\bottomrule
\end{tabular}%
} 
\end{table*}

\begin{table*}[h]
\centering
\scriptsize
\setlength{\tabcolsep}{2.0pt}
\renewcommand{\arraystretch}{1.10}
\caption{Fixed-threshold detection results for \qwen. The same cutoff is used for clean and all rewriting conditions. Best values in a column are bolded, and the second best are underlined.}
\label{tab:qwen3-1.7b-appendix-null10k-current}
\resizebox{\textwidth}{!}{%
\begin{tabular}{lccccc|cc|c}
\toprule
Method & \clean & \pegasus & \parrot & \dipper & \llmrewrite & \shortstack{Mean\\\tpro} & \shortstack{Mean\\\tprf} & \shortstack{Corp. PPL $\downarrow$\\(\pplratio)} \\
\midrule
\rowcolor{gray!15}
\multicolumn{9}{@{}l}{\textsc{Token-based watermarks}} \\
\addlinespace[2pt]
\KGW & \underline{99.9} / \underline{99.6} / \underline{99.8} & 96.8 / 79.2 / 89.6 & 98.8 / 88.2 / 96.2 & 91.4 / 44.4 / 67.8 & 76.2 / 9.8 / 27.4 & 64.2 & 76.2 & 10.9 (1.125) \\
\synthid & \textbf{100.0} / \textbf{99.8} / \underline{99.8} & 90.3 / 50.8 / 72.4 & 94.0 / 59.0 / 80.2 & 72.5 / 12.2 / 30.4 & 59.3 / 6.4 / 18.0 & 45.6 & 60.2 & 9.7 (1.007) \\
\EXP & \underline{99.9} / 99.4 / \underline{99.8} & 91.3 / 57.6 / 71.0 & 92.7 / 60.6 / 75.6 & 78.9 / 25.8 / 47.4 & 59.9 / 4.8 / 13.6 & 49.6 & 61.5 & \underline{5.2} (\underline{0.535}) \\
\SIR & \underline{99.9} / 98.4 / 99.6 & 95.6 / 83.6 / 90.0 & 98.4 / 88.4 / 94.0 & 95.2 / 61.2 / 79.4 & 81.7 / 18.8 / 38.6 & 70.1 & 80.3 & 10.4 (1.075) \\
\addlinespace[3pt]
\rowcolor{blue!10}
\multicolumn{9}{@{}l}{\textsc{Semantic watermarks}} \\
\addlinespace[2pt]
\semstamp & 99.6 / 94.4 / 99.0 & 94.5 / 50.0 / 83.8 & 96.5 / 64.4 / 89.8 & 80.8 / 16.4 / 48.0 & 64.5 / 4.0 / 24.0 & 45.8 & 68.9 & 8.5 (0.881) \\
\ksemstamp & 98.9 / 91.4 / 98.4 & 93.6 / 37.0 / 67.4 & 94.4 / 44.6 / 75.4 & 83.3 / 12.4 / 36.0 & 71.3 / 3.6 / 20.4 & 37.8 & 59.5 & 7.6 (0.792) \\
\simmark & 98.6 / 80.4 / 94.0 & 97.8 / 64.6 / 87.4 & 98.1 / 67.6 / 91.0 & 93.8 / 30.0 / 66.2 & 85.7 / 15.2 / 46.2 & 51.6 & 77.0 & 9.0 (0.930) \\
\pmark & 99.5 / 98.4 / 98.8 & 96.3 / 77.0 / 86.6 & 97.8 / 85.8 / 92.6 & 86.6 / 47.2 / 62.6 & 77.0 / 18.4 / 32.4 & 65.4 & 74.6 & \textbf{4.3} (\textbf{0.443}) \\
\alimark & \textbf{100.0} / 99.4 / \textbf{100.0} & 97.4 / 77.6 / 90.6 & 98.3 / 86.4 / 94.2 & 89.6 / 42.6 / 64.2 & 88.1 / 32.2 / 57.0 & 67.6 & 81.2 & 19.0 (1.970) \\
\samark & 96.9 / 67.2 / 92.4 & 96.8 / 68.9 / 89.4 & 97.6 / 73.9 / 90.0 & 95.6 / 56.0 / 82.5 & 93.9 / 47.0 / 76.9 & 62.6 & 86.2 & 8.5 (0.881) \\
\swordstamp & 99.8 / 96.4 / 99.4 & 99.1 / 88.4 / 97.4 & \underline{99.5} / 91.8 / \underline{98.4} & 96.9 / 69.1 / 89.8 & 85.2 / 21.5 / 50.0 & 73.4 & 87.0 & 9.3 (0.961) \\
\kswordstamp & 99.8 / 89.4 / 97.6 & 98.9 / 71.7 / 90.2 & 99.4 / 80.7 / 94.0 & 97.9 / 62.9 / 87.8 & \underline{96.9} / 43.8 / 75.3 & 69.7 & 89.0 & 9.0 (0.936) \\
\addlinespace[3pt]
\rowcolor{green!12}
\multicolumn{9}{@{}l}{\textsc{Ours}} \\
\addlinespace[2pt]
\semfield & 99.8 / 97.2 / 99.2 & \underline{99.4} / \underline{92.6} / \underline{97.6} & \underline{99.5} / \underline{93.6} / 97.8 & \underline{98.8} / \underline{86.4} / \underline{94.8} & 96.6 / \underline{69.0} / \underline{87.2} & \underline{87.8} & \underline{95.3} & 8.6 (0.889) \\
\semfieldpl & \underline{99.9} / \underline{99.6} / \underline{99.8} & \textbf{99.7} / \textbf{95.2} / \textbf{99.0} & \textbf{99.8} / \textbf{96.2} / \textbf{99.4} & \textbf{98.9} / \textbf{86.8} / \textbf{95.4} & \textbf{97.9} / \textbf{78.0} / \textbf{89.8} & \textbf{91.2} & \textbf{96.7} & 8.8 (0.910) \\
\bottomrule
\end{tabular}%
} 
\end{table*}

\begin{table*}[h]
\centering
\scriptsize
\setlength{\tabcolsep}{2.0pt}
\renewcommand{\arraystretch}{1.10}
\caption{Fixed-threshold detection results for \opt. The same cutoff is used for clean and all rewriting conditions. Best values in a column are bolded, and the second best are underlined.}
\label{tab:opt-1.3b-appendix-null10k-current}
\resizebox{\textwidth}{!}{%
\begin{tabular}{lccccc|cc|c}
\toprule
Method & \clean & \pegasus & \parrot & \dipper & \llmrewrite & \shortstack{Mean\\\tpro} & \shortstack{Mean\\\tprf} & \shortstack{Corp. PPL $\downarrow$\\(\pplratio)} \\
\midrule
\rowcolor{gray!15}
\multicolumn{9}{@{}l}{\textsc{Token-based watermarks}} \\
\addlinespace[2pt]
\KGW & 99.8 / \underline{99.8} / \underline{99.8} & 97.6 / 81.4 / 91.6 & 98.1 / 87.2 / 95.4 & 91.2 / 47.2 / 65.8 & 68.1 / 10.4 / 22.0 & 65.2 & 74.9 & 8.5 (0.876) \\
\synthid & \textbf{100.0} / \textbf{100.0} / \textbf{100.0} & 90.0 / 53.4 / 68.6 & 90.4 / 58.0 / 72.2 & 76.1 / 21.4 / 36.8 & 58.3 / 7.6 / 16.2 & 48.1 & 58.8 & 8.0 (0.832) \\
\EXP & 99.7 / 99.2 / 99.6 & 88.6 / 55.5 / 70.7 & 89.0 / 58.5 / 73.1 & 76.2 / 27.9 / 42.7 & 55.5 / 5.2 / 14.2 & 49.3 & 60.1 & \underline{5.2} (\underline{0.541}) \\
\SIR & 99.5 / 94.6 / 98.0 & 96.4 / 68.8 / 82.0 & 97.4 / 70.0 / 83.6 & 91.5 / 31.8 / 52.4 & 85.2 / 5.8 / 18.0 & 54.2 & 66.8 & 8.3 (0.856) \\
\addlinespace[3pt]
\rowcolor{blue!10}
\multicolumn{9}{@{}l}{\textsc{Semantic watermarks}} \\
\addlinespace[2pt]
\semstamp & \underline{99.9} / 98.2 / \textbf{100.0} & 95.1 / 53.0 / 86.4 & 96.8 / 60.2 / 89.6 & 86.1 / 20.4 / 59.0 & 66.4 / 5.6 / 24.0 & 47.5 & 71.8 & 8.5 (0.884) \\
\ksemstamp & 99.7 / 92.6 / \underline{99.8} & 94.7 / 42.0 / 75.2 & 95.4 / 41.2 / 76.4 & 87.0 / 14.2 / 46.2 & 69.6 / 3.4 / 14.6 & 38.7 & 62.4 & 6.5 (0.668) \\
\simmark & 98.6 / 84.6 / 94.0 & 97.1 / 67.4 / 88.2 & 97.5 / 70.6 / 89.2 & 93.8 / 41.6 / 72.4 & 86.3 / 18.6 / 47.2 & 56.6 & 78.2 & 6.7 (0.698) \\
\pmark & \underline{99.9} / 99.4 / \underline{99.8} & 95.7 / 77.0 / 86.4 & 99.1 / 93.2 / 97.0 & 86.2 / 49.0 / 63.4 & 69.3 / 13.8 / 27.4 & 66.5 & 74.8 & \textbf{4.3} (\textbf{0.442}) \\
\alimark & 99.8 / 99.4 / 99.4 & 98.5 / 83.4 / 94.6 & 98.8 / 90.6 / 96.0 & 92.6 / 51.4 / 71.6 & 91.1 / 42.2 / 64.8 & 73.4 & 85.3 & 8.5 (0.878) \\
\samark & 98.9 / 70.6 / 95.6 & 98.7 / 74.0 / 94.0 & 99.1 / 76.4 / 96.0 & 96.8 / 54.4 / 83.6 & 95.4 / 46.6 / 78.8 & 64.4 & 89.6 & 6.7 (0.697) \\
\swordstamp & \textbf{100.0} / \underline{99.8} / \underline{99.8} & \underline{99.5} / 92.0 / 98.2 & \underline{99.7} / \underline{95.0} / \underline{99.2} & 97.5 / 76.6 / 91.8 & 86.5 / 20.8 / 49.2 & 76.8 & 87.6 & 6.7 (0.689) \\
\kswordstamp & \underline{99.9} / 90.6 / 98.4 & 99.1 / 72.4 / 92.0 & 99.5 / 79.6 / 93.8 & 98.2 / 63.0 / 84.8 & 96.1 / 33.8 / 67.6 & 67.9 & 87.3 & 7.0 (0.725) \\
\addlinespace[3pt]
\rowcolor{green!12}
\multicolumn{9}{@{}l}{\textsc{Ours}} \\
\addlinespace[2pt]
\semfield & 99.8 / 97.0 / 99.6 & \underline{99.5} / \underline{92.2} / \underline{98.6} & 99.5 / 91.8 / 98.8 & \underline{99.0} / \textbf{87.2} / \textbf{96.4} & \underline{96.6} / \underline{67.4} / \underline{86.2} & \underline{87.1} & \underline{95.9} & 6.3 (0.656) \\
\semfieldpl & \textbf{100.0} / \textbf{100.0} / \textbf{100.0} & \textbf{99.8} / \textbf{95.0} / \textbf{99.6} & \textbf{99.8} / \textbf{95.8} / \textbf{99.6} & \textbf{99.1} / \underline{85.0} / \underline{95.8} & \textbf{97.5} / \textbf{71.8} / \textbf{88.0} & \textbf{89.5} & \textbf{96.6} & 6.4 (0.661) \\
\bottomrule
\end{tabular}%
} 
\end{table*}
\section{\semfield Algorithms}
\label{app:semfield-algo}

\begin{figure*}[h]
\centering
\scriptsize

\ruledalgorithm{0.47\textwidth}{Algorithm 1: Generation}{
\begin{algorithmic}[1]
\Require Prompt $p$, model $\mathcal{M}$, key $K$,
candidate count $Q$, sentence budget $T$,
variant $v\in\{\mathrm{SF},\mathrm{PL}\}$
\Ensure Watermarked document $D$

\State $\omega_K\gets\operatorname{KeyedGaussian}(K)$
\State $A_0\gets\mathbf{0}$
\State $D\gets()$

\For{$t=1,\ldots,T$}
    \State Sample completed candidates
    $C_t=\{c_{t,1},\ldots,c_{t,Q}\}$
    conditioned on $p$ and $D$
    \For{$q=1,\ldots,Q$}
        \State $x_{t,q}\gets
        \operatorname{Normalize}(\operatorname{Embed}(c_{t,q}))$
        \State $A_{t,q}\gets A_{t-1}+x_{t,q}$
        \State $G_{t,q}\gets
        \dfrac{\omega_K^{\mathsf T}A_{t,q}}
        {\lVert A_{t,q}\rVert_2}$
    \EndFor

    \If{$v=\mathrm{SF}$}
        \State $q_t^\star\gets
        \operatorname*{arg\,max}_{1\leq q\leq Q}G_{t,q}$
    \ElsIf{$t=1$}
        \State $q_t^\star\gets
        \operatorname*{arg\,max}_{1\leq q\leq Q}|G_{t,q}|$
        \State $\sigma\gets
        \operatorname{sign}(G_{t,q_t^\star})$
        \State Set $\sigma\gets+1$ if $G_{t,q_t^\star}=0$
    \Else
        \State $q_t^\star\gets
        \operatorname*{arg\,max}_{1\leq q\leq Q}
        \sigma G_{t,q}$
    \EndIf

    \State $c_t\gets c_{t,q_t^\star}$
    \State $A_t\gets A_{t,q_t^\star}$
    \State Append $c_t$ to $D$
\EndFor

\State \Return $D$
\end{algorithmic}
}
\hfill
\ruledalgorithm{0.47\textwidth}{Algorithm 2: Detection}{
\begin{algorithmic}[1]
\Require Document $D=(c_1,\ldots,c_T)$, key $K$,
variant $v\in\{\mathrm{SF},\mathrm{PL}\}$,
variant-specific threshold $\tau_v$
\Ensure Detection decision

\State $\omega_K\gets\operatorname{KeyedGaussian}(K)$
\State $A\gets\mathbf{0}$

\For{$t=1,\ldots,T$}
    \State $x_t\gets
    \operatorname{Normalize}(\operatorname{Embed}(c_t))$
    \State $A\gets A+x_t$
\EndFor

\State $Z_K\gets
\dfrac{\omega_K^{\mathsf T}A}{\lVert A\rVert_2}$

\If{$v=\mathrm{SF}$}
    \State $H_K\gets Z_K$
\Else
    \State $H_K\gets |Z_K|$
\EndIf

\If{$H_K\geq\tau_v$}
    \State \Return watermarked
\Else
    \State \Return non-watermarked
\EndIf
\end{algorithmic}
}

\caption{Generation and detection for regular \semfield
($v=\mathrm{SF}$) and polarity-locked \semfieldpl
($v=\mathrm{PL}$).}
\label{fig:semfield-algorithms}
\end{figure*}

\end{document}